\def\BibTeX{{\rm B\kern-.05em{\sc i\kern-.025em b}\kern-.08em
    T\kern-.1667em\lower.7ex\hbox{E}\kern-.125emX}}
\newtheorem{theorem}{Theorem}[section]
\newtheorem{lemma}[theorem]{Lemma}
\newtheorem{assumption}{Assumption}
\newtheorem{definition}{Definition}
\newtheorem{proposition}[theorem]{Proposition}
\newtheorem{corollary}[theorem]{Corollary}
\newcommand{\Pbar}{\Bar{P}}
\newcommand{\Id}{\mathbb{I}}
\newcommand{\gradJ}{\nabla_{s_i}J_i}
\newcommand{\Jtild}{\widetilde{J}}
\newcommand{\Ftild}{\widetilde{F}}
\newcommand{\Gtild}{\widetilde{G}}
\newcommand{\Ptild}{\widetilde{P}}
\newcommand{\Gbar}{\bar{G}}
\newcommand{\sbar}{\Bar{s}}
\newcommand{\zbar}{\Bar{z}}
\newcommand{\Rset}{\mathbb{R}}
\newcommand{\Eset}{\mathbb{E}}
\newcommand{\Ncal}{\mathcal{N}}
\newcommand{\Scal}{\mathcal{S}}
\newcommand{\Fcal}{\mathcal{F}}
\newcommand{\zi}[2]{z_i(#1\vert#2)}
\newcommand{\smax}{s_{\rm max}}
\begin{document}

\title{Learning in Time-Varying Monotone Network Games with Dynamic Populations}

\author{
Feras Al Taha, Kiran Rokade, Francesca Parise
\thanks{This work was partially supported  by the Natural Sciences and Engineering Research Council of Canada, the National Science Foundation grant ECCS-2340289 and  the Cornell Engineering Sprout Awards program.}
\thanks{
The authors are with the School of Electrical and Computer Engineering, Cornell University, Ithaca, NY, 14853, USA (e-mail:  {\tt\small\{foa6,kvr36,fp264\}@cornell.edu)}}
}

\maketitle

\begin{abstract}
In this paper, we present a framework for multi-agent learning in a nonstationary dynamic network environment.
More specifically, we examine projected gradient play in smooth monotone repeated network games in which the agents'  participation and connectivity vary over time.
We model this changing system with a stochastic network which takes a new independent realization at each repetition.
We show that the strategy profile learned by the agents through projected gradient dynamics over the sequence of network realizations converges to a Nash equilibrium of the game in which players minimize their expected cost, almost surely and in the mean-square sense. We then show that the learned strategy profile is an almost Nash equilibrium of the game played by the agents at each stage of the repeated game with high probability. Using these two results, we derive non-asymptotic bounds on the regret incurred by the agents. 
\end{abstract}

\begin{IEEEkeywords}
Gradient play, network games, dynamic population, time-varying network, open multi-agent systems.
\end{IEEEkeywords}

\section{Introduction}

Many modern socio-technical systems involve large numbers of  autonomous agents influencing each other over a network.
Instances of such  systems can be found in financial markets, online advertising, supply chain management, urban transportation, or wireless networks. Due to computational limitations or incomplete information, strategic agents in these settings will often be unable to determine in advance the most effective strategy to follow and will instead employ learning techniques to guide their actions. While performance of such learning schemes has been extensively studied in static environments, the aforementioned applications typically involve  a high volume of interactions across large populations leading to  time-variability.
This paper aims at providing a framework for studying multi-agent learning in such time-varying network settings, with a focus on gradient dynamics in smooth monotone network games subject to changing populations and changing network connectivity.

\subsection{Main contributions}

We consider a setting in which, at each repetition of the learning dynamics, players participation is probabilistic and the network of interactions is sampled from a given random graph model. By focusing on projected gradient dynamics 
for smooth and strongly monotone games, we derive almost sure and mean-square convergence guarantees of the players' strategies to a Nash equilibrium of the expected network game (i.e., a network game in which players minimize the expectation of their cost functions).
Additionally, we show that, at each repetition, the learned strategy profile is an $\epsilon$-Nash equilibrium for the stage game, with high probability, and derive bounds on the time-averaged regret that hold almost surely and in expectation.

A preliminary version of this work  appeared in \cite{altaha2023gradient}. 
Therein, almost sure convergence of gradient play to an almost Nash equilibrium of the expected game was shown for the specific class of linear-quadratic network games.
The current paper extends those results to the larger class of smooth and strongly monotone network games, derives bounds on the convergence rate (both in expectation and almost surely) and provides non-asymptotic regret guarantees.

\subsection{Related works}

Conditions for convergence to a Nash equilibrium  in \textit{static network games} have been extensively explored. 
Examples of learning dynamics studied in such literature include passivity-based schemes \cite{gadjov2018passivity,de2019distributed}, asynchronous distributed algorithms \cite{koshal2016distributed,zhu2022asynchronous}, best response schemes \cite{parise2019variational,parise2020distributed} and proximal algorithms \cite{paccagnan2016aggregative,salehisadaghiani2018distributed,gadjov2020single,grammatico2017proximal}, among others.

Prior work has also investigated learning in \textit{time-varying games}.
Under the assumption that the variation of Nash equilibria or the variation in payoff matrices diminishes over time, several papers propose algorithms that are guaranteed to converge and exhibit sublinear dynamic regret \cite{zhang2022no,duvocelle_online_learning}.
When instead the equilibria vary at a nondecreasing rate, the goal becomes to establish regret bounds that grow at the same rate of variation \cite{duvocelle_online_learning,yan2023fast,anagnostides2024convergence,lu2020online}.
For zero-sum games with payoff matrices that vary periodically, \cite{fiez2021online} shows that a broad class of learning algorithms  converges to a time-invariant set.
More broadly, for time-varying variational inequalities (VI), \cite{hadiji2024tracking} examines both 
 settings in which the VI solution varies sublinearly and settings in which the operator defining the VI changes periodically. Our work complements these results by focusing on the specific class of network games and by modeling explicitly how players' participation probabilities, network size and network variability affect the learning process. 
 
The dynamic nature of fluctuating populations is investigated also in the literature on \textit{open multi-agent systems}.
Therein, one common population model (used e.g. in consensus dynamics \cite{hendrickx2017open} and optimal resource allocation problems \cite{de2024random}) consists of a  population with fixed size in which however agents can drop out of the system to be immediately replaced by new agents.
This model was used in game theoretic applications as well, to characterize the efficiency of outcomes in games such as public good games, congestion games, markets, or first-price auctions \cite{lykouris2016learning,bergemann2010dynamic,cavallo2009efficient,cavaliere2012prosperity}.
Similar models of dynamic populations have also been studied for truthful mechanism design \cite{dolan1978incentive,parkes2003mdp, cavallo2009efficient}.
A more general population model used in consensus dynamics allows the population size to vary with agents entering at any time or dropping out without  replacement \cite{de2020modelling,franceschelli2020stability}.
In contrast to these papers, our dynamic population model assumes a fixed set of agents whose participation at each stage is probabilistic, thus allowing the number of participating agents to vary and agents to rejoin after leaving.
A similar probabilistic model has been used to represent random day-to-day agent participation in a multi-agent reinforcement learning setting in \cite{fiscko2023model} but with a fixed underlying network of interactions.
In another similar open multi-agent system framework
\cite{oliva2023sum}, agents can leave and rejoin at any time, creating arbitrary new links in the network when joining.
This differs from the stochastic network model we propose in which a new independent network realization is generated at each stage.

\subsection{Paper organization}

The rest of the paper is organized as follows.
Section \ref{sec:recap} introduces the framework of network games and summarizes known results on gradient dynamics for repeated games over a static network.
Section \ref{sec:dyn_net} presents our model for time-varying networks with dynamic populations, and properties of gradient play in such setting.
Convergence of these gradient dynamics is demonstrated in Section \ref{sec:conv_regr} along with regret guarantees.
Section \ref{sec:concl} concludes the paper.

\subsection{Notation}
Let $[y]_i$ denote the $i$-th component of the vector $y\in\Rset^n$ and $X_{ij}$ denote the $ij$-th entry of the matrix $X\in\Rset^{m\times n}$.
The symbol $\mathbf{1}_n$ represents the $n$-dimensional vector of all ones and the symbol $\Id_n$ represents the $n\times n$ identity matrix.
We use $\|\cdot\|_2$ to denote the Euclidean norm.
We also use $\|\cdot\|_F$ and $\|\cdot\|_2$ to denote the Frobenius norm and the matrix norm induced by the Euclidean norm.
Given $n$ vectors $y_1,\dots,y_n$, $[y_i]_{i\in\{1,\dots,n\}}$ represents the vector formed by stacking $y_1,\dots,y_n$ vertically.
Given $n$ square matrices $X_1, \dots, X_n$, $\textrm{diag}(X_1,\dots,X_n)$ represents the matrix formed by stacking $X_1, \dots, X_n$ block-diagonally.
The Kronecker product of two matrices is denoted by the symbol $\otimes$ and the projection of a point $s$ onto the set $\mathcal{X}$ is denoted by $\Pi_\mathcal{X}[s]$.
Given an operator $F: \Rset^{Nn} \rightarrow \Rset^{Nn}$, and a set $\Scal \subseteq \Rset^{Nn}$, we say that $\sbar \in \Scal$ is a solution of the variational inequality VI$(F,\Scal)$ if $(s-\sbar)^\top F(\sbar) \ge 0 \ \forall s \in \Scal$.

\section{Preliminaries} \label{sec:recap}

In this section, we review the network game model that will be used to represent the interactions of strategic agents.
We  then recap known results on the outcome of learning dynamics in  static environments.

\subsection{One shot game} \label{sec:stage_game}

We consider a game in which agents indexed by $i\in\Ncal:=\{1,\dots,N\}$ interact over a static network represented by an adjacency matrix $\Gtild \in [0,1]^{N\times N}$, with no self-loops, i.e., $\Gtild_{ii} = 0$ for all $i \in \Ncal$.
Each agent $i\in\Ncal$ plays a strategy $s_i \in \Scal_i\subseteq \Rset^n$ and incurs a cost
\begin{align}
    J_i(s_i, \zi{s}{\Gtild})
\end{align}
where $J_i:\Rset^{n}\times\Rset^{n} \to \Rset$ 
is the cost function, and 
\begin{align}
    \zi{s}{\Gtild} := \frac1N \sum_{j=1}^N \Gtild_{ij} s_j
\end{align}
denotes the local aggregate sensed by agent $i$ over the network $\Gtild$, given a strategy profile $s:= [s_i]_{i\in\Ncal}$. 
We require that the cost functions and the strategy sets satisfy the following classic assumption. 
\begin{assumption}[Differentiability, convexity, and compactness] \label{a:strat_set}
    For each $i\in\Ncal$, 
    \begin{enumerate}
        \item the cost function $J_i(s_i, z_i)$ is continuously differentiable in $(s_i,z_i)$, and convex in $s_i$ for all $z_i$. 
        \item the strategy set $\Scal_i$ is nonempty, convex, and compact. Moreover, $\exists \, \smax >0$ (independent of $N$) 
        such that for all $i \in \Ncal$ and all $s_i \in \Scal_i$, $\|s_i\|_2 \leq \smax$.
        \item   $\exists \, \bar{J}_1 > 0, \exists \, \bar{J}_2 > 0$ (independent of $N$) such that $|J_i(s_i,z_i)| \leq \bar{J}_1$ and $\|\nabla_{s_i} J_i(s_i, z_i)\|_2 \leq \bar{J}_2$ for all admissible $s_i, z_i$.  
    \end{enumerate}
\end{assumption}
Note that the assumption that $\smax$, $\bar{J}_1$, and $\bar{J}_2$ are independent from $N$ is typically met in practice because the size of each agent's strategy set does not scale with $N$ and their cost typically only depends on their own action and their local aggregate.

A common solution concept for studying the outcome of a game is the Nash equilibrium, in which agents have no incentive to unilaterally change their strategies.
\begin{definition}[$\epsilon$-Nash equilibrium]
Let $\epsilon \geq 0$. 
An \textit{$\epsilon$-Nash equilibrium} is any strategy profile $\sbar\in\Rset^{nN}$ for which, for all $i\in\Ncal$, it holds that $\sbar_i\in\Scal_i$ and 
\begin{align} \label{eq:eps_NE}
    J_i(\sbar_i, \zi{\sbar}{\Gtild}) \le J_i(s_i, \zi{\sbar}{\Gtild}) + \epsilon \qquad \forall s_i \in \Scal_i.
\end{align}
If $\sbar$ satisfies \eqref{eq:eps_NE} with $\epsilon=0$, then $\sbar$ is a \textit{Nash equilibrium}.
\end{definition}

By defining the \textit{game Jacobian} operator
\begin{align}
    F(s, \Gtild) &:= \Big[\gradJ(s_i,\zi{s}{\Gtild})\Big]_{i\in\Ncal} \, ,
\end{align}
one can characterize the Nash equilibria of a convex game and relate them to the solution of a variational inequality defined by the operator $F$ above and the solution set  $\mathcal{S} := \prod_{i \in \Ncal} \Scal_i$ (see e.g. \cite{facchinei2003finite,scutari2010convex} for general games and \cite{parise2019variational} for network games). Specifically, under Assumption \ref{a:strat_set}, $\sbar$ is a Nash equilibrium of the game played over the network $\tilde{G}$ if and only if it solves the variational inequality VI$(F(\cdot,\Gtild), \mathcal{S})$, that is, 
\begin{align}
    (s-\sbar)^\top F(\sbar,\Gtild) \ge 0 \qquad \forall s \in \Scal.
\end{align} 
If, additionally, $F(\cdot,\Gtild)$ is strongly monotone, then the equilibrium is unique \cite[Equations (13) and (18)]{scutari2010convex}. 

\subsection{Learning in static repeated games}

In many settings, agents do not know a priori what strategy to play. 
Instead, they may refine their strategy over multiple repetitions of the same game. For example, if the network $\Gtild$ doesn't change over time, agents may use the projected gradient dynamics
\begin{align} \label{eq:grad_dyn}
    s^{k+1} = \Pi_{\mathcal{S}} [ s^k - \tau  F(s^k, \Gtild) ],
\end{align}
where $\tau>0$ is a fixed step size and $s^k$ is the strategy profile at iteration $k\ge 0$, initialized at $s^0 \in \Scal$. 

Under suitable assumptions (including strong monotonicity and Lipschitz continuity of $F(\cdot, \Gtild)$, as well as a small enough step size $\tau$), 
the dynamics in~\eqref{eq:grad_dyn} converge to the unique Nash equilibrium of the game played over the network $\Gtild$ \cite[Theorem~12.1.2]{facchinei2003finite}.

In the next section, we examine a similar learning setting, where, however, the network and the population may change over time. 
Importantly, in such a non-stationary environment, the equilibrium of the underlying stage game is also changing over time, making the learning task non-trivial.

\section{Time-Varying Network Games} \label{sec:dyn_net}

In contrast to the static environment presented in the previous section, real-world interactions are dynamic and frequently changing. Moreover, agents may occasionally abstain from participating in the game. 
This section presents a stochastic model for representing such non-stationary environments, and explores the learning dynamics of agents in such settings.

\subsection{Random network model} \label{sec:rdn_net_model}

We consider a repeated network game where each stage game is played over a random network.
At every repetition $k=0,1,2,\dots$, a new network realization $G^k \in [0,1]^{N\times N}$ is generated by independently sampling the links $G_{ij}^k$ from a distribution with bounded support in the unit interval\footnote{We restrict the support of $G^k_{ij}$ to $[0,1]$ without loss of generality. The results in this paper can easily be extended to allow any bounded convex support.} and expectation $\Gbar_{ij} \in [0,1]$, for all $i,j\in\Ncal$ such that $i\neq j$.
Additionally, the diagonal entries are set to zero to avoid self-loops ($G_{ii}^k=0$ for all $i\in\Ncal$).
By setting $\Gbar_{ii}=0$ for all $i\in\Ncal$ as well, the adjacency matrix $\Gbar=\Eset[G^k]$ represents the expected network of interactions when all agents are present.

Next, we model the random participation of agents in the game.
For each agent $i\in\Ncal$, the Bernoulli random variable $P_{ii}^k \sim \textrm{Ber}(\Pbar_{ii})$ takes a value of one with probability $\Pbar_{ii}>0$ if agent $i$ is participating in the stage game at iteration $k$, and zero otherwise.
We define $P^k \in\{0,1\}^{N\times N}$ and $\Pbar \in[0,1]^{N\times N}$ to be the diagonal matrices with $P_{ii}^k$ and $\Pbar_{ii}$ as their $i$-th diagonal entries, respectively. We refer to $P^k$ as the participation matrix. 

The matrix $G^k P^k$ represents the effective realized network of interactions at iteration $k$ since a realization $P_{ii}^k=0$ will set the $i$-th column of the adjacency matrix $G^k$ to zero, representing the fact that agent $i$ does not contribute to the local aggregate $z_j(s|G^kP^k)$ of any other agent $j \in \Ncal$, and thus, does not affect the cost of other agents at this iteration. 
The case where $\Pbar=\Id_N$ corresponds to the setting where the population is static and all agents participate at every iteration almost surely.

\subsection{Learning dynamics}

For each player $i$, we consider projected gradient dynamics of the form
\begin{align} \label{eq:grad_for_dyn_pop}
    s^{k+1}_i\!&=\!\begin{cases}
    \Pi_{\mathcal{S}_i} \Big[ s^k_i - \tau^k \nabla_{s_i}J_i(s_i^k, \zi{s^k}{G^k P^k} ) \Big] & \text{if}\, P_{ii}^k=1,\\
    s_i^k & \text{if}\, P_{ii}^k=0,
    \end{cases}
\end{align}
for all $k \ge 0$, where $s_i^0 \in \Scal_i$ and $\tau^k > 0$ is the step size at iteration~$k$. 
Compared to the dynamics for the stationary environment introduced in \eqref{eq:grad_dyn}, there are three main differences. First, agents only update their strategy when they are present (i.e., when $P_{ii}^k=1$). Second, they use the current gradient information which depends on the realized time-varying network $G^k$ and on the realized participation matrix $P^k$.\footnote{Note that the agents need not know the effective realized network $G^k P^k$ or the full strategy profile $s^k$ to update their strategy. Instead, they only need to know their current strategy $s_i^k$ and their local aggregate $z_i(s^k|G^kP^k)$ sensed via the realized network.}  Third, agents are allowed to use time-varying step-sizes $\tau_k$ to compensate for the non-stationary environment. 

Note that due to the stochastic nature of the network, the gradient update of each agent given in \eqref{eq:grad_for_dyn_pop} is random.
If we view this random update as a deterministic update subject to a random perturbation, we can interpret the gradient dynamics in \eqref{eq:grad_for_dyn_pop} as a form of stochastic gradient play for a deterministic game in which players minimize their expected cost, as formalized next. 

Given a random network $\Gtild\Ptild$ distributed identically to the random network $G^k P^k$ described in Section~\ref{sec:rdn_net_model}, let $\Jtild_i : \Rset^n \to \Rset$ denote the conditional expectation of the cost incurred by agent $i$ given a strategy profile $s\in\Scal$, which is given by
\begin{align} \label{eq:Jtild}
    \Jtild_i(s) := \Eset [J_i(s_i,\zi{s}{\Gtild \Ptild}) \, | \, s]
\end{align}
where the expectation is taken over the  realizations of $\Gtild$ and $\Ptild$.  
The game Jacobian operator of the corresponding game is defined~as
\begin{align}
\label{eq: F tilde}
    \Ftild(s) := [\nabla_{s_i} \Jtild_i(s)]_{i\in\Ncal}.
\end{align}
The game Jacobian operator $\Ftild$ can be shown to be the expectation of the random game Jacobian operator $F(\cdot,\Gtild \Ptild)$.
\begin{lemma}
\label{lem:grad_exp=exp_grad}
    Suppose that Assumption \ref{a:strat_set} holds. For each $i \in \Ncal$, the cost function $\Jtild_i(s)$ is well-defined. Moreover, for all $i \in \Ncal$ and $s \in \Scal$, 
    \begin{align}
        \nabla_{s_i} \Eset [J_i(s_i,\zi{s}{\Gtild \Ptild}) \, | \, s ] 
        &= \Eset [\nabla_{s_i} J_i(s_i,\zi{s}{\Gtild \Ptild}) \, | \, s ] 
    \end{align}
    and
    \begin{align}
        \Big\| \nabla_{s_i} \Eset [J_i(s_i,\zi{s}{\Gtild \Ptild}) \, | \, s ] \Big\|_2 \leq \bar{J}_2. 
    \end{align}\vspace{0.01cm}
\end{lemma}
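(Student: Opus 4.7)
The plan is to verify each of the three claims in turn, all of which should follow from the boundedness conditions in Assumption \ref{a:strat_set}. The random network $\Gtild \Ptild$ has bounded support (entries lie in $[0,1]$) and, conditional on $s$, the only randomness in $J_i(s_i, \zi{s}{\Gtild\Ptild})$ comes from $\Gtild$ and $\Ptild$, so all three claims reduce to checking standard measure-theoretic hypotheses.

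For well-definedness of $\Jtild_i$, I would note that $|J_i(s_i, \zi{s}{\Gtild\Ptild})| \le \bar J_1$ pointwise by Assumption \ref{a:strat_set}.3, so $J_i(s_i, \zi{s}{\Gtild\Ptild})$ is a bounded (hence integrable) random variable and its conditional expectation given $s$ is well-defined and finite. For the interchange of gradient and conditional expectation, I would invoke the Leibniz rule / dominated-convergence version of differentiation under the integral sign. The required hypotheses are: (i) $s_i \mapsto J_i(s_i, \zi{s}{\Gtild\Ptild})$ is differentiable almost surely, which holds by Assumption \ref{a:strat_set}.1; and (ii) the partial derivative is dominated by an integrable function. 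Here the crucial observation is that Assumption \ref{a:strat_set}.3 provides a uniform bound $\|\nabla_{s_i} J_i(s_i, z_i)\|_2 \le \bar J_2$ independent of $z_i$ (and hence independent of the realization of $\Gtild\Ptild$), so the constant $\bar J_2$ serves as the dominating function. One minor subtlety to mention is that, because $\Gtild_{ii} = 0$ by assumption, $\zi{s}{\Gtild\Ptild}$ does not depend on $s_i$, so differentiating the composition $s_i \mapsto J_i(s_i, \zi{s}{\Gtild\Ptild})$ reduces cleanly to the partial derivative in the first argument, with no chain-rule contribution from the second.

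Finally, the norm bound on the expected gradient follows from Jensen's inequality applied to the convex map $x \mapsto \|x\|_2$: using the just-established interchange,
\begin{align*}
\bigl\| \nabla_{s_i} \Eset[J_i(s_i,\zi{s}{\Gtild\Ptild}) \mid s] \bigr\|_2
&= \bigl\| \Eset[\nabla_{s_i} J_i(s_i,\zi{s}{\Gtild\Ptild}) \mid s] \bigr\|_2 \\
&\le \Eset\bigl[\|\nabla_{s_i} J_i(s_i,\zi{s}{\Gtild\Ptild})\|_2 \mid s\bigr] \le \bar J_2,
\end{align*}
where the last inequality is again Assumption \ref{a:strat_set}.3.

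I expect the proof to be essentially routine given the strong uniform bounds available. The only point requiring care is a clean statement of the Leibniz/dominated-convergence hypotheses for conditional expectations; no deep argument is needed because the uniform bound $\bar J_2$ on the gradient sidesteps any need for local Lipschitz estimates or more delicate arguments.
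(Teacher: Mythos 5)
Your proposal is correct and follows essentially the same route as the paper, which likewise invokes the dominated convergence theorem with the uniform bounds of Assumption \ref{a:strat_set}; the mean value theorem step the paper alludes to is exactly the standard mechanism (bounding the difference quotients by $\bar J_2$) hidden inside the Leibniz/differentiation-under-the-integral result you cite. Your added observations, that $\zi{s}{\Gtild\Ptild}$ is independent of $s_i$ since $\Gtild_{ii}=0$ and that the final bound is Jensen's inequality, are both accurate and consistent with the paper's intent.
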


\begin{proof}
    The desired result follows from the dominated convergence theorem where boundedness of the sequence on which the theorem is applied can be demonstrated via Assumption \ref{a:strat_set} and the mean value theorem.
\end{proof}

In the next result, we show that the learning dynamics in \eqref{eq:grad_for_dyn_pop} can be rewritten as a projected stochastic gradient descent scheme.
\begin{proposition} \label{prop:sgd_dyn}
    The dynamics described in \eqref{eq:grad_for_dyn_pop} are equivalent to the following dynamics
    \begin{align} \label{eq:sgd_dyn}
        s^{k+1} =  \Pi_\Scal \Big[ s^k - \tau^k \Delta ( \Ftild(s^k) + w^k )\Big]
    \end{align}
    for all $k \geq 0$, where $\Delta := \Pbar \otimes \Id_n$ and
    \begin{align} \label{eq:wk_defn}
        w^k := \Delta^{-1} (P^k \otimes \Id_n) F(s^k,G^kP^k) - \Ftild(s^k).  
    \end{align}
\end{proposition}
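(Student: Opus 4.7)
The plan is to show the two dynamics coincide by simplifying the right-hand side of \eqref{eq:sgd_dyn} block-wise and matching it to the two cases of \eqref{eq:grad_for_dyn_pop}.

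First, I would observe that the quantity $\Delta \bigl(\Ftild(s^k)+w^k\bigr)$ collapses in a convenient way. Since $\Delta=\Pbar\otimes\Id_n$ is invertible (because $\Pbar_{ii}>0$ for all $i$ by assumption, so $\Delta^{-1} = \Pbar^{-1}\otimes\Id_n$ is well defined), substituting the definition of $w^k$ from \eqref{eq:wk_defn} yields
\begin{align*}
\Delta\bigl(\Ftild(s^k)+w^k\bigr)
&= \Delta\,\Ftild(s^k) + \Delta\Delta^{-1}(P^k\otimes\Id_n)F(s^k,G^kP^k) - \Delta\,\Ftild(s^k)\\
&= (P^k\otimes\Id_n)\,F(s^k,G^kP^k).
\end{align*}
Hence the right-hand side of \eqref{eq:sgd_dyn} reduces to $\Pi_\Scal\bigl[s^k-\tau^k(P^k\otimes\Id_n)F(s^k,G^kP^k)\bigr]$.

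Next I would exploit the product structure of the feasible set. Because $\Scal=\prod_{i\in\Ncal}\Scal_i$, the projection factorizes as $\Pi_\Scal[y]=[\Pi_{\Scal_i}[y_i]]_{i\in\Ncal}$. The $i$-th block of $(P^k\otimes\Id_n)F(s^k,G^kP^k)$ equals $P_{ii}^k\,\nabla_{s_i}J_i(s_i^k,\zi{s^k}{G^kP^k})$, so the $i$-th block of the update reads
\begin{align*}
s_i^{k+1}=\Pi_{\Scal_i}\Bigl[s_i^k-\tau^k P_{ii}^k\,\nabla_{s_i}J_i(s_i^k,\zi{s^k}{G^kP^k})\Bigr].
\end{align*}

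Finally, I would conclude by a two-case analysis on $P_{ii}^k\in\{0,1\}$. When $P_{ii}^k=1$, the expression inside $\Pi_{\Scal_i}$ matches the first branch of \eqref{eq:grad_for_dyn_pop} verbatim. When $P_{ii}^k=0$, the update becomes $\Pi_{\Scal_i}[s_i^k]=s_i^k$, which coincides with the second branch because $s_i^k\in\Scal_i$ (as $s_i^0\in\Scal_i$ and the projection keeps iterates in $\Scal_i$ by induction). This establishes the equivalence.

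I do not foresee any substantive obstacle: the argument is essentially algebraic, relying only on invertibility of $\Delta$ (guaranteed by $\Pbar_{ii}>0$), separability of the projection across players, and the block structure of the Kronecker product. The mild subtlety worth flagging is simply verifying that $s_i^k\in\Scal_i$ throughout, which follows by a trivial induction on $k$ from the initialization $s_i^0\in\Scal_i$ and the fact that both branches of \eqref{eq:grad_for_dyn_pop} preserve membership in $\Scal_i$.
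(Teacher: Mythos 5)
Your proposal is correct and follows essentially the same route as the paper's proof, just run in the opposite direction: the paper starts from the per-player update, inserts the factor $P_{ii}^k$ to absorb both cases, and adds and subtracts $\nabla_{s_i}\Jtild_i(s^k)$ to produce $w_i^k$, whereas you start from \eqref{eq:sgd_dyn} and cancel the same terms to recover $(P^k\otimes\Id_n)F(s^k,G^kP^k)$. The algebra, the use of the product structure of $\Scal$, and the observation that $\Pi_{\Scal_i}[s_i^k]=s_i^k$ when $P_{ii}^k=0$ are all the same; your explicit remark on the induction keeping $s_i^k\in\Scal_i$ is a minor point the paper leaves implicit.
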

\begin{proof}
    The strategy update of player $i$ in \eqref{eq:grad_for_dyn_pop} can be rewritten as follows
    \begin{align*}
        s_i^{k+1} &= \Pi_{\Scal_i} \Big[ s_i^k - P_{ii}^k \cdot \tau^k \nabla_{s_i}J_i(s_i^k, \zi{s^k}{G^k P^k} )  \Big]\\
        &= \Pi_{\Scal_i} \Big[ s_i^k - (\bar{P}_{ii} \tau^k) \Big(  \frac{P_{ii}^k}{\bar{P}_{ii}}  \nabla_{s_i}J_i(s_i^k, \zi{s^k}{G^k P^k}) \\
        &\quad + \nabla_{s_i}\Jtild(s^k) - \nabla_{s_i}\Jtild(s^k) \Big )\Big]\\
        &= \Pi_{\Scal_i} \Big[ s_i^k - (\bar{P}_{ii} \tau^k) ( \nabla_{s_i}\Jtild(s^k) + w_i^k )\Big]
    \end{align*}
    where $w_i^k:= (P_{ii}^k/\bar{P}_{ii}) \cdot \nabla_{s_i}J_i(s_i^k, \zi{s^k}{G^k P^k})-\nabla_{s_i}\Jtild_i(s^k)$.
\end{proof}

The perturbation vector $w^k$ can be shown to satisfy some properties that will be useful for the convergence analysis of the dynamics in~\eqref{eq:sgd_dyn}.

\begin{proposition} \label{prop:wk}
    Suppose that Assumption \ref{a:strat_set} holds. Let $\{\Fcal_k\}_{k\ge 0}$ denote an increasing sequence of $\sigma$-algebras such that $s^k$ is $\Fcal_k$-measurable.
    Then, for all $k \geq 0$, the vector $w^k$ defined in \eqref{eq:wk_defn} satisfies  
    \begin{align}
        \label{eq:wk_cond_1} 
        \Eset[\, w^k \, \vert\,  \Fcal_k\, ] &= 0, 
    \end{align}
    and
    \begin{align}
        \label{eq:wk_cond_2} 
        \Eset[\,  \|w^k\|_2^2 \, \vert\,  \Fcal_k\, ] &\le M^2 N < \infty
    \end{align}where $M:=\bar{J}_2 ( (\min_{i\in\Ncal} \Pbar_{ii})^{-1} + 1)$.
\end{proposition}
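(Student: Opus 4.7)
The plan is to analyze the two claims block-by-block, exploiting that $w^k$ decomposes as $w^k = [w^k_i]_{i \in \Ncal}$ with
\[
w_i^k = \frac{P_{ii}^k}{\Pbar_{ii}}\,\nabla_{s_i} J_i\bigl(s_i^k, \zi{s^k}{G^k P^k}\bigr) - \nabla_{s_i}\Jtild_i(s^k),
\]
as already derived in the proof of Proposition~\ref{prop:sgd_dyn}.

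For \eqref{eq:wk_cond_1}, I would first argue that, conditioned on $\Fcal_k$, the scalar $P_{ii}^k$ is independent of $\zi{s^k}{G^kP^k}$. The key observation here is that $\zi{s^k}{G^kP^k} = \tfrac{1}{N}\sum_j G^k_{ij} P_{jj}^k s_j^k$ and the constraint $G^k_{ii}=0$ eliminates any dependence on $P_{ii}^k$; combined with the assumed mutual independence of the entries of $G^k$ and $P^k$, this gives the desired independence. Consequently, conditioning on $\Fcal_k$ and using $\Eset[P_{ii}^k]=\Pbar_{ii}$,
\[
\Eset\Bigl[\tfrac{P_{ii}^k}{\Pbar_{ii}}\nabla_{s_i}J_i(s_i^k,\zi{s^k}{G^kP^k})\,\big|\,\Fcal_k\Bigr] = \Eset\bigl[\nabla_{s_i}J_i(s_i^k,\zi{s^k}{G^kP^k})\,\big|\,\Fcal_k\bigr].
\]
Since $G^k P^k$ is distributed identically to $\Gtild\Ptild$, Lemma~\ref{lem:grad_exp=exp_grad} then allows me to exchange gradient and expectation, identifying the right-hand side with $\nabla_{s_i}\Jtild_i(s^k)$. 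Subtracting $\nabla_{s_i}\Jtild_i(s^k)$ yields $\Eset[w_i^k\mid\Fcal_k]=0$ for every $i$, hence \eqref{eq:wk_cond_1}.

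For \eqref{eq:wk_cond_2}, I would obtain a deterministic (i.e., sample-path) bound on each block and then sum. By the triangle inequality,
\[
\|w_i^k\|_2 \leq \tfrac{P_{ii}^k}{\Pbar_{ii}}\bigl\|\nabla_{s_i}J_i(s_i^k,\zi{s^k}{G^kP^k})\bigr\|_2 + \bigl\|\nabla_{s_i}\Jtild_i(s^k)\bigr\|_2.
\]
Assumption~\ref{a:strat_set} bounds the first gradient by $\bar J_2$, and Lemma~\ref{lem:grad_exp=exp_grad} bounds the second by $\bar J_2$; since $P_{ii}^k\in\{0,1\}$, the multiplicative factor is at most $(\min_i \Pbar_{ii})^{-1}$. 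Thus $\|w_i^k\|_2 \leq M$ pointwise, which yields $\|w^k\|_2^2 \leq M^2 N$, and this bound survives conditioning on $\Fcal_k$.

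The only subtle point, and the one I would write out carefully, is the conditional independence argument used in \eqref{eq:wk_cond_1}; everything else reduces to invoking Assumption~\ref{a:strat_set}, Lemma~\ref{lem:grad_exp=exp_grad}, and the triangle inequality. No strong-monotonicity or Lipschitz assumption on $F$ is needed at this stage.
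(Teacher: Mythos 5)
Your proposal is correct and follows essentially the same route as the paper's proof: the same blockwise decomposition of $w^k$, the same independence argument (that $G^k_{ii}=0$ removes any dependence of $z_i(s^k|G^kP^k)$ on $P^k_{ii}$, so the expectation factors and $\Eset[P^k_{ii}]/\Pbar_{ii}=1$), the same appeal to Lemma~\ref{lem:grad_exp=exp_grad} to exchange gradient and expectation, and the same pointwise triangle-inequality bound $\|w^k_i\|_2\le M$ summed over $i$ for the second moment. No gaps.
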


\begingroup
\allowdisplaybreaks
\begin{proof}
For each $i \in \Ncal$, we have
\begin{align*}
    &\Eset[w^k_i \ | \ \Fcal_k] \\
    &= \Eset\left[ \frac{P_{ii}^k}{\bar{P}_{ii}} \cdot \nabla_{s_i}J_i(s_i^k, \zi{s^k}{G^k P^k}) - \nabla_{s_i}\Jtild_i(s^k) \ \Bigg| \ s^k\right] \\
    &\overset{(a)}{=} \frac{\Eset\left[P_{ii}^k\right]}{\bar{P}_{ii}} \cdot \Eset \left[\nabla_{s_i}J_i(s_i^k, \zi{s^k}{G^k P^k}) \ | \ s^k\right] - \nabla_{s_i}\Jtild_i(s^k) \\
    &\overset{(b)}{=} \frac{\bar{P}_{ii}}{\bar{P}_{ii}} \cdot \nabla_{s_i} \Eset \left[J_i(s_i^k, \zi{s^k}{G^k P^k}) \ | \ s^k\right] - \nabla_{s_i}\Jtild_i(s^k) \\
    &=  \nabla_{s_i} \Jtild_i(s^k) - \nabla_{s_i}\Jtild_i(s^k) = 0
\end{align*}
where $(a)$ follows from the fact that $P_{ii}^k$ is independent from $G_{ij}^k P_{jj}^k$ for all $j\neq i$ and that $G_{ii}^k P_{ii}^k = 0$, and $(b)$ follows from Lemma \ref{lem:grad_exp=exp_grad}. 
For any $i \in \Ncal$, we also have
\begin{align*}
    \|w^k_i\|_2 &= \left\|\frac{P_{ii}^k}{\bar{P}_{ii}} \cdot \nabla_{s_i}J_i(s_i^k, \zi{s^k}{G^k P^k}) - \nabla_{s_i}\Jtild_i(s^k)\right\|_2 \\
    &\leq \frac{P_{ii}^k}{\bar{P}_{ii}} \cdot \|\nabla_{s_i}J_i(s_i^k, \zi{s^k}{G^k P^k})\|_2 + \|\nabla_{s_i}\Jtild_i(s^k)\|_2 \\
    &\leq \frac{1}{\min_{i \in \Ncal} \Pbar_{ii}} \bar{J}_2 + \bar{J}_2 = M
\end{align*} 
where the second inequality follows from Assumption \ref{a:strat_set} and Lemma \ref{lem:grad_exp=exp_grad}. Hence, 
\begin{align*}
    \Eset[\, \|w^k \|^2_2 \, | \, \Fcal_k \,] &= \Eset\left[ \sum_{i \in \Ncal} \|w^k_i \|^2_2 \ \Bigg | \ \Fcal_k\right] \\
    &\leq \Eset \left[\sum_{i \in \Ncal} M^2 \ \Bigg| \ \Fcal_k \right] =M^2 N. 
\end{align*}
\end{proof}
\endgroup

\section{Convergence and Regret Analysis} \label{sec:conv_regr}

In this section, we examine the convergence properties of the learning dynamics presented in \eqref{eq:grad_for_dyn_pop} (equivalently, \eqref{eq:sgd_dyn}) and the regret incurred by each agent when following these dynamics.

\subsection{Convergence}

To compensate for the network variability, we first assume that each agent uses a diminishing step size.

\begin{assumption}[Step size] \label{a:step_size} 
    The step size sequence $\{\tau^k\}_{k \geq 0}$ satisfies $\tau^k>0$ for all $k$, $\tau^k\to 0$ as $k \rightarrow \infty$, $\sum_{k=0}^\infty \tau^k = \infty$ and $\sum_{k=0}^\infty (\tau^k)^2 < \infty$.
\end{assumption}

To analyze the asymptotic behavior of the stochastic dynamics in \eqref{eq:grad_for_dyn_pop}, we follow the same approach as \cite{jiang2008stochastic} and show that we can express the sequence of deviations between the strategy profile iterates $s^k$ and a candidate limiting profile as an \textit{almost supermartingale} sequence \cite{robbins1971convergence}.
This allows us to leverage the following known result from \cite{robbins1971convergence} which demonstrates almost sure asymptotic convergence of almost supermartingale sequences.

\begin{lemma}[\hspace{1sp}{\cite{robbins1971convergence}}] \label{lem:supermartingale}
    Let $\{\Fcal_k\}_{k\ge 0}$ be an increasing sequence of $\sigma$-algebras.
    For each $k=0,1,2,\dots$, let $V_k$, $\alpha_k$, $\beta_k$ and $\gamma_k$ be nonnegative random variables adapted to $\Fcal_k$.
    If it holds almost surely that $\Sigma_{k=0}^\infty \alpha_k<\infty$, $\Sigma_{k=0}^\infty \beta_k<\infty$ and 
    \begin{align*}
        \Eset[V_{k+1} \mid \Fcal_k] \le (1+\alpha_k) V_k + \beta_k - \gamma_k,
    \end{align*}
    then $\{V_k\}_{k\ge 0}$ is convergent almost surely and $\Sigma_{k=0}^\infty \gamma_k < \infty$ almost surely.
\end{lemma}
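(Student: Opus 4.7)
The plan is to transform the almost-supermartingale recursion on $V_k$ into a genuine supermartingale by (i) absorbing the multiplicative factor $(1+\alpha_k)$ through a pathwise normalization, and (ii) compensating for the forcing terms $\beta_k$ and $-\gamma_k$ with running sums. Convergence of $V_k$ and summability of $\gamma_k$ will then follow from Doob's convergence theorem for supermartingales, after a localization step to handle the fact that my supermartingale will only be almost-surely (not uniformly) bounded below.

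To begin, I would set $\mu_k := \prod_{j=0}^{k-1}(1+\alpha_j)$. Since $\sum_k \alpha_k < \infty$ almost surely and $\log(1+x)\le x$ for $x\ge 0$, the sequence $\mu_k$ is nondecreasing and converges almost surely to a finite random variable $\mu_\infty$ with $1\le\mu_k\le\mu_\infty$. Dividing the hypothesis by $\mu_{k+1}=(1+\alpha_k)\mu_k$ yields
\begin{align*}
\Eset[V_{k+1}/\mu_{k+1}\mid \Fcal_k] \le V_k/\mu_k + \beta_k/\mu_{k+1} - \gamma_k/\mu_{k+1},
\end{align*}
so the compensated process $M_k := V_k/\mu_k + \sum_{j=0}^{k-1}(\gamma_j-\beta_j)/\mu_{j+1}$ is an $\Fcal_k$-adapted supermartingale.

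The main obstacle is that $M_k$ is bounded below only by $-\sum_{j=0}^\infty \beta_j/\mu_{j+1}$, which is finite almost surely (using $\mu_{j+1}\ge 1$ and $\sum_k\beta_k<\infty$) but not uniformly in $\omega$. I would handle this by introducing the stopping times $\tau_n := \inf\{k : \sum_{j=0}^{k-1}\beta_j/\mu_{j+1} > n\}$ and applying Doob's convergence theorem to the stopped process $M_{k\wedge \tau_n}$, which is bounded below by $-n$ and hence $L^1$-bounded. Since $\{\tau_n=\infty\}$ exhausts the full-probability event $\{\sum_k\beta_k/\mu_{k+1}<\infty\}$ as $n\to\infty$, the unstopped process $M_k$ itself converges almost surely.

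To conclude, from $M_k\to M_\infty$ and $\sum_j\beta_j/\mu_{j+1}\to S_\beta$ (both finite a.s.), the sum $V_k/\mu_k + \sum_{j=0}^{k-1}\gamma_j/\mu_{j+1}$ converges a.s. to $M_\infty+S_\beta$. Each summand is nonnegative and the $\gamma$-series is monotone nondecreasing, so both must converge individually to finite limits. Combining this with $\mu_k\to\mu_\infty\in[1,\infty)$ gives almost-sure convergence of $V_k=\mu_k\cdot(V_k/\mu_k)$, while $\sum_k\gamma_k\le \mu_\infty\sum_k\gamma_k/\mu_{k+1}<\infty$ almost surely, as claimed.
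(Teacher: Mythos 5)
The paper offers no proof of this lemma---it is imported directly from Robbins and Siegmund---so there is nothing in-paper to compare against; what you have written is essentially the classical Robbins--Siegmund argument, and its skeleton is sound. The normalization step is valid precisely because $\mu_{k+1}=(1+\alpha_k)\mu_k$ is $\Fcal_k$-measurable, the compensated process $M_k$ is indeed a supermartingale (modulo integrability, see below), and the final disentangling---using nonnegativity of $V_k/\mu_k$ together with monotonicity of the $\gamma$-series to extract individual limits, then undoing the normalization via $\mu_k\to\mu_\infty\in[1,\infty)$---is correct. Two technical points need repair. First, with $\tau_n=\inf\{k:\sum_{j=0}^{k-1}\beta_j/\mu_{j+1}>n\}$ the stopped process is bounded below by $-n$ only for $k<\tau_n$; at $k=\tau_n$ the partial sum has already overshot $n$, by as much as $\beta_{\tau_n-1}/\mu_{\tau_n}$, which is not uniformly bounded, so the claimed lower bound fails exactly at the stopping time. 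Shifting the index to $\tau_n=\inf\{k:\sum_{j=0}^{k}\beta_j/\mu_{j+1}>n\}$ (still a stopping time, since the sum is $\Fcal_k$-measurable) fixes this. Second, Doob's convergence theorem requires $M_{k\wedge\tau_n}$ to be an integrable supermartingale, but the lemma as stated assumes only nonnegativity of $V_k$, not $\Eset[V_k]<\infty$; without integrability your $M_k$ is only a generalized supermartingale, and one needs an additional truncation of $V_k$ (or an appeal to a convergence theorem for generalized nonnegative supermartingales) to close the argument in full generality. In the paper's application $V_k=\|s^k-\sbar\|_{\Delta^{-1}}^2$ is uniformly bounded, so this caveat is harmless there, but it should be addressed if you intend the proof to cover the lemma exactly as stated.
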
 

With this lemma and with strong monotonicity of the operator $\Ftild$, we show that the dynamics in \eqref{eq:grad_for_dyn_pop}  converge almost surely.
\begin{assumption}[Monotonicity of $\Ftild$] \label{a:Ls_mu}
The operator $\Ftild$ is $\mu$-strongly monotone for some $\mu>0$, i.e., $(s_1-s_2)^\top(\Ftild(s_1)-\Ftild(s_2))\ge \mu \|s_1-s_2\|_2^2$ for all $s_1,s_2\in\Rset^{nN}$.
\end{assumption}

\begin{theorem} \label{thm:conv} 
    Suppose that Assumptions \ref{a:strat_set}, \ref{a:step_size} and \ref{a:Ls_mu} hold. 
    Given $s^0\in\Scal$, let $\{s^k\}_{k\ge0}$ be a sequence generated by the iteration in \eqref{eq:grad_for_dyn_pop} and let $\sbar$ denote the unique solution to VI$(\Ftild,\Scal)$. 
    Then, $s^k$ converges to $\sbar$ almost surely.
\end{theorem}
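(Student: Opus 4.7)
The plan is to construct a Lyapunov function that turns the iterates into an almost supermartingale sequence in the sense of Lemma~\ref{lem:supermartingale}, using the stochastic-gradient reformulation of Proposition~\ref{prop:sgd_dyn}. The key technical observation is that $\Delta = \Pbar \otimes \Id_n$ is block-diagonal and conforms to the product structure $\Scal = \prod_{i\in\Ncal} \Scal_i$; in particular, the Euclidean projection onto $\Scal$ coincides with the projection in the weighted inner product $\langle x,y\rangle_{\Delta^{-1}} := x^\top \Delta^{-1} y$, and $\Pi_\Scal$ is therefore nonexpansive with respect to $\|\cdot\|_{\Delta^{-1}}$.

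I would set $V_k := \|s^k - \sbar\|_{\Delta^{-1}}^2$ and expand using the nonexpansiveness just mentioned together with \eqref{eq:sgd_dyn}:
\begin{align*}
V_{k+1} &\le \|s^k - \sbar - \tau^k \Delta(\Ftild(s^k) + w^k)\|_{\Delta^{-1}}^2 \\
&= V_k - 2\tau^k (s^k - \sbar)^\top (\Ftild(s^k) + w^k) \\
&\quad + (\tau^k)^2 (\Ftild(s^k) + w^k)^\top \Delta (\Ftild(s^k) + w^k).
\end{align*}
The cancellation $\Delta^{-1}\Delta = \Id$ in the cross term is exactly what motivates this weighted norm. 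Taking $\Eset[\cdot\mid\Fcal_k]$, the linear-in-$w^k$ piece vanishes by Proposition~\ref{prop:wk}, and $\Delta \preceq \Id$ combined with Assumption~\ref{a:strat_set}(3), Lemma~\ref{lem:grad_exp=exp_grad}, and Proposition~\ref{prop:wk} bounds the quadratic piece by $(\tau^k)^2 C$ with $C := 2N(\bar{J}_2^2 + M^2)$. The remaining drift term $-2\tau^k (s^k-\sbar)^\top \Ftild(s^k)$ I would handle by writing $(s^k-\sbar)^\top\Ftild(s^k) = (s^k-\sbar)^\top(\Ftild(s^k)-\Ftild(\sbar)) + (s^k-\sbar)^\top \Ftild(\sbar)$; Assumption~\ref{a:Ls_mu} bounds the first summand below by $\mu\|s^k-\sbar\|_2^2$, and the VI characterization of $\sbar$ makes the second nonnegative. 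Since $\|s^k-\sbar\|_2^2 \ge p_{\min} V_k$ for $p_{\min} := \min_{i\in\Ncal}\Pbar_{ii} > 0$, I obtain
\begin{align*}
\Eset[V_{k+1}\mid \Fcal_k] \le (1 - 2\tau^k \mu p_{\min})\, V_k + (\tau^k)^2 C.
\end{align*}

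This inequality fits Lemma~\ref{lem:supermartingale} with $\alpha_k = 0$, $\beta_k = (\tau^k)^2 C$ and $\gamma_k = 2\tau^k \mu p_{\min} V_k$; Assumption~\ref{a:step_size} gives $\sum_k \beta_k < \infty$. The lemma then delivers almost sure convergence of $V_k$ to some random $V_\infty \ge 0$ together with $\sum_k \tau^k V_k < \infty$ a.s. Because $\sum_k \tau^k = \infty$, a strictly positive $V_\infty$ on any set of positive probability would force $\sum_k \tau^k V_k$ to diverge there, so $V_\infty = 0$ almost surely; equivalence of $\|\cdot\|_{\Delta^{-1}}$ and $\|\cdot\|_2$ on $\Rset^{nN}$ then gives $s^k \to \sbar$ almost surely.

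The main subtlety I expect is handling the diagonal weighting $\Delta$ sitting in front of $\Ftild$ in the gradient step: a naive Euclidean computation would require $(s-\sbar)^\top \Delta (\Ftild(s)-\Ftild(\sbar)) \ge c\|s-\sbar\|_2^2$, which does not follow from strong monotonicity of $\Ftild$ alone once $\Delta$ has unequal diagonal entries. The change of metric to $\|\cdot\|_{\Delta^{-1}}$, legitimate precisely because $\Scal$ is a product set, is what reconciles the weighting with the monotonicity hypothesis.
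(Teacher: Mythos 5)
Your proposal is correct and follows essentially the same route as the paper: the weighted Lyapunov function $V_k=\|s^k-\sbar\|_{\Delta^{-1}}^2$, nonexpansiveness of $\Pi_\Scal$ in that metric (valid because $\Scal$ is a product set and $\Delta$ is block-diagonal), vanishing of the noise term via Proposition~\ref{prop:wk}, the strong-monotonicity-plus-VI lower bound on the drift, Lemma~\ref{lem:supermartingale}, and the same non-summability contradiction to force the limit to zero. The only cosmetic differences are that you fold the factor $\min_i\Pbar_{ii}$ into $\gamma_k$ one step earlier than the paper does (the paper keeps $\gamma_k=2\tau^k(s^k-\sbar)^\top\Ftild(s^k)$ and only invokes the norm equivalence in the contradiction step) and your constant $C$ carries a harmless extra factor of $2$.
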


\begin{proof}    
    To demonstrate almost sure convergence, we start by bounding the expected deviation between $s^k$ and $\sbar$ in the norm weighted\footnote{For a given positive definite matrix $Q \in\Rset^{nN\times nN}$, we define the weighted norm $\| s \|_Q := \sqrt{s^T Q s}$.} by $\Delta^{-1}$.
    For any $k \geq 0$, we have
    \begin{align*}
        &\Eset[\|s^{k+1}-\sbar\|_{\Delta^{-1}}^2 \,\vert\, \Fcal_k] \\
        &\overset{(a)}{=} \Eset[ \| \Pi_\Scal[s^k - \tau^k \Delta (\Ftild(s^k)+w^k)]- \Pi_\Scal[\sbar] \|_{\Delta^{-1}}^2 \,\vert\, \Fcal_k] \\
        &\overset{(b)}{\le} \Eset[\|s^k - \tau^k \Delta (\Ftild(s^k)+w^k) - \sbar \|_{\Delta^{-1}}^2 \,\vert\, \Fcal_k] \\
        &= \|s^k-\sbar\|_{\Delta^{-1}}^2 - 2 \tau^k (s^k-\sbar)^\top \Delta^{-1} \Delta (\Ftild(s^k) +\Eset[w^k \,\vert\, \Fcal_k] ) \\
        &\quad + (\tau^k)^2 \left(\| \Delta \Ftild(s^k)\|_{\Delta^{-1}}^2 + \Eset[\|\Delta w^k\|_{\Delta^{-1}}^2 \,\vert\, \Fcal_k]\right) \\
        &\quad + 2 (\tau^k)^2 (\Delta \Ftild(s^k))^\top \Delta^{-1} ( \Delta \Eset[w^k \,\vert\, \Fcal_k]) \\
        &\overset{(c)}{=} \|s^k-\sbar\|_{\Delta^{-1}}^2 - 2 \tau^k (s^k-\sbar)^\top \Ftild(s^k) \\
        &\quad + (\tau^k)^2 \left( \| \Delta \Ftild(s^k)\|_{\Delta^{-1}}^2 + \Eset[\|\Delta w^k\|_{\Delta^{-1}}^2 \,\vert\, \Fcal_k]\right) \\
        &\overset{(d)}{\le} \underbrace{\|s^k-\sbar\|_{\Delta^{-1}}^2}_{=: V_k}  - \underbrace{2 \tau^k (s^k-\sbar)^\top \Ftild(s^k)}_{=:\gamma_k}\\
        &\quad  + \underbrace{(\tau^k)^2 \max_{i\in\Ncal} \Pbar_{ii}  (\bar{J}_2^2 N + M^2 N)}_{=: \beta_k} , 
    \end{align*}
    where
    $(a)$ follows from Proposition \ref{prop:sgd_dyn},
    $(b)$ follows from the fact that the projection operator $\Pi_\Scal$ is nonexpansive in the weighted norm,\footnote{The projection under the standard and weighted norm coincide due to the structure of $\Delta^{-1}$.} 
    $(c)$ follows from the fact that $\Eset[w^k \,\vert\, \Fcal_k]=0$ by Proposition \ref{prop:wk}, 
    and $(d)$ follows from the fact that for any vector $s \in \Rset^{nN}$, $\|\Delta s\|_{\Delta^{-1}}^2 = s^T \Delta \Delta^{-1} \Delta s = s^T \Delta s \leq \max_{i\in\Ncal} \Delta_{ii} \|s\|_2^2$, and from Lemma \ref{lem:grad_exp=exp_grad} and Proposition \ref{prop:wk}. 

    In the following, we show that the assumptions of Lemma~\ref{lem:supermartingale} are satisfied by the sequence $\{V_k\}_{k\ge 0}$, with $\alpha_k=0$ for all $k$.
    By Assumption \ref{a:step_size}, $\tau^k$ is square-summable which implies that $\sum_{k = 0}^\infty \beta_k < \infty$. 
    To show that $\gamma_k$ is nonnegative, note that
    \begin{align}
        \gamma_k &=  2 \tau^k (s^k-\sbar)^\top \Ftild(s^k) \nonumber \\ 
        &= 2 \tau^k [(s^k-\sbar)^\top (\Ftild(s^k)-\Ftild(\sbar)) + (s^k-\sbar)^\top \Ftild(\sbar) ] \nonumber \\ 
        &\ge 2 \tau^k \mu \|s^k-\sbar\|_2^2, \label{eq: gamma_k lower bound}
    \end{align}
    where the inequality follows from strong monotonicity of $\Ftild$ (Assumption \ref{a:Ls_mu}) and from the fact that $\sbar$ solves the variational inequality  $(s-\sbar)^\top \Ftild(\sbar) \ge 0 \ \forall s\in\Scal$. 
    Hence, by Lemma \ref{lem:supermartingale}, there exists a random variable $\epsilon\ge 0$ such that $\|s^k - \sbar\|_{\Delta^{-1}}^2 \to \epsilon$ as $k \rightarrow \infty$ almost surely, and $\sum_{k = 0}^\infty \gamma_k < \infty$ almost surely. 

    Let $\Omega$ be the sample space, and let $\epsilon : \Omega \to \Rset_+$, and for each $k \geq 0$, $\gamma_k : \Omega \to \Rset_+$. Suppose there exists a positive measure set $A \subseteq \Omega$ such that for all $\omega \in A$, $\epsilon(\omega) > 0$. Then, for all $\omega \in A$, there exists some $\Bar{k}(\omega) \geq 0$ such that for all $k \geq \Bar{k}(\omega)$, $\|s^k - \Bar{s}\|_{\Delta^{-1}}^2 \geq \epsilon(\omega)/2$. Further, using \eqref{eq: gamma_k lower bound}, we obtain
    \begin{align}
        \sum_{k = 0}^\infty \gamma_k(\omega) \geq \sum_{k = \bar k(\omega)}^\infty \gamma_k(\omega) \geq \mu \epsilon(\omega) \cdot \sum_{k = \bar k(\omega)}^\infty \tau^k = \infty
    \end{align}
    since $\tau^k$ is non-negative and non-summable by Assumption~\ref{a:step_size}. Thus, for all $\omega \in A$, $\sum_{k = 0}^\infty \gamma_k(\omega) = \infty$, resulting in a contradiction to the fact that $\sum_{k = 0}^\infty \gamma_k < \infty$ almost surely. 
    Therefore, we must have $\epsilon = 0$ almost surely, which implies that $\{s^k\}_{k\ge 0}$ converges to $\sbar$ almost surely. 
\end{proof}

The following corollary is an immediate consequence of Theorem \ref{thm:conv} and provides a connection between the strategy learned with the gradient dynamics in the time-varying game and the unique Nash equilibrium of the static game played over the expected cost functions. 
\begin{corollary}
    Suppose that Assumptions \ref{a:strat_set}, \ref{a:step_size} and \ref{a:Ls_mu} hold. 
    Given $s_0\in\Scal$, let $\{s^k\}_{k\ge0}$ be a sequence generated by the iteration in \eqref{eq:grad_for_dyn_pop}. 
    Then, $s^k$ converges to the unique Nash equilibrium of the game in which players minimize the expected cost functions \eqref{eq:Jtild} almost surely.
\end{corollary}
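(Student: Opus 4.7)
The plan is to observe that this corollary is an almost direct consequence of Theorem \ref{thm:conv} combined with the standard variational characterization of Nash equilibria in smooth convex games. Theorem \ref{thm:conv} already establishes that $s^k$ converges almost surely to the unique solution $\sbar$ of VI$(\Ftild, \Scal)$. What remains is to identify $\sbar$ as the unique Nash equilibrium of the game in which each agent minimizes the expected cost $\Jtild_i$ over $\Scal_i$.

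To invoke this characterization, I would first verify that the expected-cost game satisfies the standing assumptions of a smooth convex game, so that the equivalence recalled in Section \ref{sec:stage_game} applies verbatim. Continuous differentiability of $\Jtild_i$ together with the identity $\nabla_{s_i}\Jtild_i(s) = \Eset[\nabla_{s_i}J_i(s_i,\zi{s}{\Gtild\Ptild})\,\vert\,s]$ is already provided by Lemma \ref{lem:grad_exp=exp_grad}. For convexity of $\Jtild_i$ in $s_i$, I would exploit the fact that $\Gtild_{ii}=0$, so that $\zi{s}{\Gtild\Ptild}$ does not depend on $s_i$; the integrand $J_i(s_i, \zi{s}{\Gtild\Ptild})$ is then convex in $s_i$ for every realization of $(\Gtild,\Ptild)$ by Assumption \ref{a:strat_set}, and convexity is preserved under expectation. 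The strategy set $\Scal_i$ is already nonempty, convex, and compact by the same assumption.

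With these properties in place, the characterization recalled in Section \ref{sec:stage_game} yields that a profile is a Nash equilibrium of the expected-cost game if and only if it solves VI$(\Ftild, \Scal)$, with $\Ftild$ the game Jacobian defined in \eqref{eq: F tilde}. Strong monotonicity of $\Ftild$ (Assumption \ref{a:Ls_mu}) then forces this variational inequality to admit exactly one solution, so $\sbar$ and the unique Nash equilibrium of the expected-cost game must coincide, and the almost sure convergence $s^k \to \sbar$ from Theorem \ref{thm:conv} delivers the claim. The only nonroutine step is the convexity check for $\Jtild_i$ in $s_i$, which is immediate once the absence of self-loops is exploited; I therefore do not anticipate any genuine obstacle, as the argument is essentially bookkeeping to embed the expected-cost game into the framework of Section \ref{sec:stage_game}.
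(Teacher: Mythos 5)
Your proposal is correct and follows essentially the same route as the paper: invoke Theorem \ref{thm:conv} for almost sure convergence to the unique solution $\sbar$ of VI$(\Ftild,\Scal)$, then identify $\sbar$ with the unique Nash equilibrium of the expected-cost game via the variational characterization of Section \ref{sec:recap}. The paper states this identification in one line, whereas you additionally verify the hypotheses (convexity of $\Jtild_i$ in $s_i$ via the no-self-loop structure and preservation of convexity under expectation, differentiability via Lemma \ref{lem:grad_exp=exp_grad}), which is sound but not a different argument.
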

\begin{proof}
    By Theorem \ref{thm:conv}, $s^k$ converges almost surely to the unique solution $\sbar$ of the variational inequality $(s-\sbar)^\top \Ftild(\sbar) \ge 0 \ \forall s\in\Scal$, which is the unique Nash equilibrium of the game played over the expected cost functions \eqref{eq:Jtild}. 
\end{proof}

To help derive bounds on the regret incurred by each player when playing the strategy prescribed by each iterate $s^k$, we next derive rates for the almost sure convergence of $s^k$ to $\sbar$ and for the convergence in the mean-square sense, under different choices of step size sequences. 

\begin{theorem} \label{thm:rates}
    Suppose that Assumptions \ref{a:strat_set}, \ref{a:step_size} and \ref{a:Ls_mu} hold. 
    \begin{enumerate}
        \item Let the step size sequence be such that $\tau^0 = 1$ and 
        \begin{align} 
            \tau^k = \frac{1}{k^{1 - \theta}}
        \end{align}
        for all $k \geq 1$ and for some fixed $\theta \in (0,1/2)$.  
        Then, for any fixed $N \geq 1$ and for any $\beta \in (2\theta, 1)$, the following holds almost surely\footnote{Eq. \eqref{eq:as_conv} means that, for a fixed $N$, $\lim_{k \rightarrow \infty} \|s^k - \sbar\|_2 \cdot k ^{(1 - \beta)/2} = 0$.} 
        \begin{align} \label{eq:as_conv}
            \|s^k-\sbar\|_2 = o\left(\frac{1}{k^{(1 - \beta)/2}}\right). 
        \end{align}
        
        \item Let $C_2 := 2 \mu \min_{i \in \Ncal} \Pbar_{ii}$. 
        Let the step size sequence be such that 
        \begin{align}
            \tau^k = B \delta_k,
        \end{align}
         with $ B > 1/C_2$ and where $\{\delta_k\}_{k \geq 1}$ is a chosen sequence that satisfies
        \begin{align}
            \delta_k\rightarrow 0, \quad  \delta_k (1 - \delta_k) \leq \delta_{k+1}
        \end{align}
        for all $k \geq 1$.  
        Then, $s^k$ converges to $\sbar$ in the mean square sense with the following rate of convergence:
        \begin{align}
            \Eset[\|s^k-\sbar\|_2] \leq \sqrt{N D \delta_k} \quad \text{for all } k\ge K
        \end{align}
        for some $D>0$ and $K\geq1$ large enough.
    \end{enumerate}
\end{theorem}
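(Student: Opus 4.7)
The plan is to leverage the one-step inequality already derived in the proof of Theorem~\ref{thm:conv}. Setting $V_k := \|s^k - \sbar\|_{\Delta^{-1}}^2$, that proof establishes $\Eset[V_{k+1}\mid\Fcal_k] \le V_k - 2\tau^k\mu\|s^k-\sbar\|_2^2 + CN(\tau^k)^2$ for a constant $C$ depending only on $\bar J_2$ and $M$. Since $\|s\|_2^2 \ge (\min_{i\in\Ncal}\Pbar_{ii})\|s\|_{\Delta^{-1}}^2$, this can be rewritten as the contractive recursion
\begin{align}\label{eq:plan-main}
\Eset[V_{k+1}\mid\Fcal_k] \le (1 - C_2 \tau^k)\, V_k + CN(\tau^k)^2,
\end{align}
which is the workhorse for both parts. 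For Part~2 I would take unconditional expectations in \eqref{eq:plan-main}, substitute $\tau^k = B\delta_k$, and prove by induction that $\Eset[V_k]\le ND\delta_k$ for all $k\ge K$. The inductive step produces $\Eset[V_{k+1}] \le ND\delta_k - N\delta_k^2\bigl(D(C_2B-1) - CB^2\bigr)$, which together with $\delta_{k+1}\ge\delta_k(1-\delta_k)$ is bounded by $ND\delta_{k+1}$ whenever $D(C_2B-1)\ge CB^2$; this is feasible because $C_2B > 1$ by the assumption $B > 1/C_2$. The base case at $k=K$ is handled by enlarging $D$ to absorb the finite initial value $\Eset[V_K]$ (finite by compactness of $\Scal$). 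The claimed bound on $\Eset[\|s^k-\sbar\|_2]$ then follows from Jensen's inequality together with $\|s\|_2^2\le\|s\|_{\Delta^{-1}}^2$.

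For Part~1 I would analyze the weighted sequence $W_k := k^{1-\beta} V_k$. Multiplying \eqref{eq:plan-main} by $(k+1)^{1-\beta}$ gives $\Eset[W_{k+1}\mid\Fcal_k] \le A_k W_k + B_k$, with $A_k := (1+1/k)^{1-\beta}(1-C_2/k^{1-\theta})$ and $B_k := CN(k+1)^{1-\beta}/k^{2-2\theta}$. A Taylor expansion yields $A_k - 1 = (1-\beta)/k - C_2/k^{1-\theta} + o(k^{-(1-\theta)})$; since $\theta\in(0,1/2)$ implies $1-\theta<1$, the $C_2$ term dominates and $A_k\le 1$ for all $k$ beyond some threshold $K_0$, while $B_k \sim k^{-(1+\beta-2\theta)}$ is summable precisely when $\beta>2\theta$. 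Rewriting the recursion as $\Eset[W_{k+1}\mid\Fcal_k]\le W_k + B_k - (1-A_k)W_k$ for $k\ge K_0$ casts it in the almost-supermartingale form of Lemma~\ref{lem:supermartingale}, so $W_k\to W_\infty$ almost surely and $\sum_k(1-A_k)W_k < \infty$ almost surely. Because $1-A_k\sim C_2/k^{1-\theta}$ is \emph{not} summable, $W_\infty>0$ on any positive-measure set would force the series to diverge there --- exactly the contradiction argument used at the end of the proof of Theorem~\ref{thm:conv}. Hence $W_\infty=0$ almost surely, so $V_k=o(k^{-(1-\beta)})$ almost surely, and $\|s^k-\sbar\|_2\le\|s^k-\sbar\|_{\Delta^{-1}}$ gives the stated rate.

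The main obstacle is the calibration of the weight in Part~1: the exponent $1-\beta$ must be small enough that the residual $B_k$ is summable (forcing $\beta>2\theta$), while $1-A_k$ must eventually be nonnegative yet non-summable --- the latter being essential to pin the Robbins--Siegmund limit $W_\infty$ to zero rather than just to a finite value. The open interval $\beta\in(2\theta,1)$ in the theorem is precisely the window in which all three requirements hold simultaneously; outside it either the perturbation cannot be controlled or the contraction is too weak to force $W_\infty=0$.
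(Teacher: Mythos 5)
Your proposal is correct. Part 2 is essentially the paper's proof verbatim: the same recursion $e_{k+1}\le(1-C_2\tau^k)e_k+C_1N(\tau^k)^2$, the same induction with $D$ enlarged to cover both the base case and the condition $D(BC_2-1)\ge C_1 B^2$, and the same passage to $\Eset[\|s^k-\sbar\|_2]$ via Jensen and $\|\cdot\|_2\le\|\cdot\|_{\Delta^{-1}}$. (One transcription slip: the coefficient of $-N\delta_k^2$ in your inductive display should be $DC_2B-CB^2$ rather than $D(C_2B-1)-CB^2$; the sufficient condition you then extract, $D(C_2B-1)\ge CB^2$, is nevertheless the right one and matches the paper's $D\ge B^2C_1/(BC_2-1)$.) Part 1 is where you genuinely diverge: the paper derives the recursion \eqref{eq: expected error inequality} and then simply cites Lemma 1 of \cite{liu2022almost} for the almost-sure $o(k^{-(1-\beta)})$ rate, whereas you prove that step directly by applying Lemma \ref{lem:supermartingale} to the weighted sequence $W_k=k^{1-\beta}V_k$ and reusing the contradiction argument from the end of Theorem \ref{thm:conv} to force $W_\infty=0$. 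Your calibration is sound: $1-A_k\sim C_2k^{-(1-\theta)}$ eventually dominates the $(1-\beta)/k$ inflation factor (so $A_k\le1$ for large $k$) yet is non-summable (so the Robbins--Siegmund limit must vanish), while $B_k\sim C_1Nk^{-(1+\beta-2\theta)}$ is summable precisely when $\beta>2\theta$. The only housekeeping item is that your supermartingale inequality holds only for $k\ge K_0$; either restart the lemma at $K_0$ or set $\alpha_k=(A_k-1)^+$, which is summable because it vanishes for all large $k$. Your route is longer but self-contained and makes the role of the window $\beta\in(2\theta,1)$ explicit; the paper's route is shorter but delegates the key almost-sure rate to an external lemma.
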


\begin{proof}
    \begin{enumerate}
        \item Let $C_1 := \max_{i \in \Ncal} \Pbar_{ii} (\bar{J}_2^2 + M^2)$. 
        By following the steps of
        the proof of Theorem \ref{thm:conv}, we obtain that for all $k \geq 0$,
        \begin{align*}
            &\Eset[\|s^{k+1}-\sbar\|_{\Delta^{-1}}^2 \,\vert\, \Fcal_k] \leq \|s^k-\sbar\|_{\Delta^{-1}}^2 \\
            &\qquad + (\tau^k)^2 \max_{i\in\Ncal} \Pbar_{ii} (\bar{J}_2^2 + M^2) N - 2 \tau^k (s^k-\sbar)^\top \Ftild(s^k) \\
            &\leq \|s^k-\sbar\|_{\Delta^{-1}}^2 + (\tau^k)^2 C_1 N - 2 \tau^k  \mu \|s^k - \sbar\|_2^2. 
        \end{align*}
        Note that for any vector $s \in \Rset^{Nn}$, we have $\| s \|_2^2 \geq \min_{i \in \Ncal} \Pbar_{ii} \| s \|_{\Delta^{-1}}^2$.
        Hence, 
        \begin{align} 
            \nonumber&\Eset[\|s^{k+1}-\sbar\|_{\Delta^{-1}}^2 \,\vert\, \Fcal_k] \\
            \nonumber&\leq \|s^k-\sbar\|_{\Delta^{-1}}^2 + (\tau^k)^2 C_1 N - 2 \tau^k \mu \min_{i \in \Ncal} \Pbar_{ii} \|s^k - \sbar\|_{\Delta^{-1}}^2 \\
            &= (1 - \tau^k C_2) \|s^k - \sbar\|_{\Delta^{-1}}^2 + (\tau^k)^2 C_1 N. \label{eq: expected error inequality}
        \end{align} 
        Hence, by \cite[Lemma 1]{liu2022almost}, since $\theta \in (0,1/2)$ and $\tau^k=1/k^{1-\theta}$ for $k\ge 1$, it follows that for any fixed $\beta \in (2\theta, 1)$, 
        \begin{align*}
            \|s^k-\sbar\|_{\Delta^{-1}}^2 = o\left(\frac{1}{k^{1 - \beta}}\right) 
        \end{align*}
        for all $k \geq 0$ almost surely. Hence, \eqref{eq:as_conv} follows.
        
        \item By taking the expectation on both sides of \eqref{eq: expected error inequality} and defining $e_k := \Eset\left[\|s^k - \sbar\|_{\Delta^{-1}}^2\right]$, we obtain that for all $k \geq 0$,
        \begin{align*}
            e_{k+1} \leq (1 - \tau^k C_2) e_k + (\tau^k)^2 C_1 N. 
        \end{align*}
        Let $C_3 := 4 \smax^2/\min_{i\in\Ncal} \Pbar_{ii}$. Note that for all $k \geq 0$, 
        \begin{align*}
            \|s^k - \sbar\|_{\Delta^{-1}}^2 &\leq \frac{1}{\min_{i\in\Ncal} \Pbar_{ii}} \|s^k - \sbar\|_2^2 \\
            &\leq \frac{1}{\min_{i\in\Ncal} \Pbar_{ii}} (\|s^k\|_2 + \|\sbar\|_2)^2\\
            &\leq \frac{4 \smax^2 N}{\min_{i\in\Ncal} \Pbar_{ii}} \\
            &= C_3 N, 
        \end{align*}
        where the third inequality is due to Assumption \ref{a:strat_set}.
        Since $\delta_k\rightarrow 0$, there exists $K>0$ such that $\delta_k \le 1/(BC_2)$ for all $k\ge K$. Let
        \begin{align}
            D =  \max\left\{\frac{C_3}{\delta_K}, \ \frac{B^2 C_1}{B C_2 - 1}\right\}.  
        \end{align}
        We show by induction that $e_k \leq N D \delta_k$ for all $k \geq K$.
        For the base case $k=K$, we have that
        \begin{align*}
            e_K = \Eset[\|s^K - \sbar\|_{\Delta^{-1}}^2] \leq C_3 N \leq N D \delta_K. 
        \end{align*}
        Suppose now that $e_k \le N D \delta_k$ holds for some $k\geq K$. Then,
        \begin{align*}
            e_{k+1} &\leq (1 - \tau^{k} C_2) e_{k} + (\tau^{k})^2 C_1 N \\
            &\overset{(a)}{\leq} (1 - B \delta_{k} C_2) N D \delta_{k} + (B \delta_{k})^2 C_1 N \\
            &= N D \delta_{k} + (\delta_{k})^2N (- D B C_2 + B^2 C_1) \\
            &= N D \delta_{k} - N D (\delta_{k})^2 \\
            &\qquad - (\delta_{k})^2 N (B C_2 - 1) \cdot \left(D - \frac{B^2 C_1}{B C_2 - 1}\right) \\
            &\overset{(b)}{\leq} N D \delta_{k}( 1 - \delta_{k}) \\
            &\overset{(c)}{\leq} N D \delta_{k+1}.  
        \end{align*}
        where $(a)$ follows from $\tau^k=B\delta_k$, from the fact that $B C_2 \delta_k \leq 1 $, and from $e_k \le N D \delta_k$, $(b)$ follows from $B C_2 > 1$ and $D \geq B^2 C_1/(B C_2 - 1)$, and $(c)$ follows from $\delta_k(1 - \delta_k) \leq \delta_{k+1}$. 
        
         Thus, we obtain that for all $k \geq K$, 
        \begin{align*}
            (\Eset[\|s^k - \sbar\|_2])^2 &\leq 
            \Eset[\|s^k - \sbar\|_2^2] \\
            &\leq \Eset[\|s^k - \sbar\|_{\Delta^{-1}}^2] 
            \leq N D \delta_k.  
        \end{align*}        
    \end{enumerate}
\end{proof}

The following corollary gives an example of a step size sequence that satisfies the conditions of Theorem \ref{thm:rates}, Part 2). 
\begin{corollary}
\label{cor:delta_k_choice}
    Suppose that Assumptions \ref{a:strat_set}, \ref{a:step_size} and \ref{a:Ls_mu} hold. 
    Let $\tau^0=B$ and $\tau^k = B/k^\alpha$ for all $k\geq 1$ where $B=2^\alpha/C_2$ and $\alpha\in(0,1]$. 
    Then, it holds that $s^k$ converges to $\sbar$ in the mean square sense with the following rate of convergence:
    \begin{align}
        \Eset[\|s^k-\sbar\|_2] \leq \sqrt{\frac{N D}{k^\alpha}} \quad \text{for all } k\ge 2,
    \end{align}
    where $D = \max\left\{ 2^\alpha C_3, B^2 C_1/(2^\alpha - 1)\right\}$ with $C_3=4\smax^2/\min_{i\in\Ncal}\Pbar_{ii}$. 
\end{corollary}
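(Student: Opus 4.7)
The plan is to apply Theorem \ref{thm:rates} Part 2) with the specific choice $\delta_k = 1/k^\alpha$, so that $\tau^k = B\delta_k$ fits the framework of that theorem, and then check that all the hypotheses on $B$ and $\delta_k$ are met and that the constants $K$ and $D$ reduce to the values stated in the corollary.

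First I would verify the condition $B > 1/C_2$: since $B = 2^\alpha / C_2$ and $\alpha \in (0,1]$, we have $2^\alpha > 1$, so $B > 1/C_2$ as required. Next, $\delta_k = 1/k^\alpha \to 0$ is immediate, so the main technical step is to check the recursion $\delta_k(1 - \delta_k) \le \delta_{k+1}$ for all $k \ge 1$. Equivalently, setting $f(k) = 1/k^\alpha$, we need $f(k) - f(k+1) \le f(k)^2$. By the mean value theorem applied to $x \mapsto x^{-\alpha}$, there is $\xi \in (k, k+1)$ with $f(k) - f(k+1) = \alpha/\xi^{\alpha+1} \le \alpha/k^{\alpha+1}$. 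Thus it suffices to show $\alpha/k^{\alpha+1} \le 1/k^{2\alpha}$, i.e., $\alpha \le k^{1-\alpha}$, which holds for every $k \ge 1$ since $\alpha \le 1 \le k^{1-\alpha}$.

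Having verified the hypotheses, Theorem \ref{thm:rates} Part 2) applies. I would then identify the constants: we need $K$ such that $\delta_k \le 1/(BC_2)$ for all $k \ge K$. Since $BC_2 = 2^\alpha$, the condition $1/k^\alpha \le 1/2^\alpha$ is equivalent to $k \ge 2$, so we may take $K = 2$, yielding $\delta_K = 1/2^\alpha$ and hence $C_3/\delta_K = 2^\alpha C_3$. Thus the constant from the theorem becomes
\begin{align*}
D = \max\left\{\frac{C_3}{\delta_K}, \ \frac{B^2 C_1}{BC_2 - 1}\right\} = \max\left\{2^\alpha C_3, \ \frac{B^2 C_1}{2^\alpha - 1}\right\},
\end{align*}
which matches the statement of the corollary. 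Substituting back into the bound $\Eset[\|s^k - \sbar\|_2] \le \sqrt{ND\delta_k}$ from Theorem \ref{thm:rates} Part 2) gives the desired $\Eset[\|s^k - \sbar\|_2] \le \sqrt{ND/k^\alpha}$ for all $k \ge 2$.

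The only nontrivial piece is the recursion check $\delta_k(1-\delta_k) \le \delta_{k+1}$; everything else is bookkeeping on the constants. Since the mean value theorem argument works uniformly for all $\alpha \in (0,1]$ and all $k \ge 1$, I do not anticipate any hidden obstacle, and the proof is essentially a direct specialization of Theorem \ref{thm:rates} Part 2).
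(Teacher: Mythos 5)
Your proposal is correct and follows essentially the same route as the paper: specialize Theorem \ref{thm:rates} Part 2) to $\delta_k = 1/k^\alpha$, take $K=2$, and read off $D$. The only difference is cosmetic: you verify $\delta_k(1-\delta_k)\le\delta_{k+1}$ via the mean value theorem bound $f(k)-f(k+1)\le \alpha/k^{\alpha+1}\le 1/k^{2\alpha}$, whereas the paper does the same check by an algebraic manipulation using the inequality $(k+1)^\alpha \le k^\alpha+1$ (Lemma \ref{lem:k_alpha}); both arguments are valid for all $\alpha\in(0,1]$ and $k\ge 1$.
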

\begin{proof}
    First, we show that the sequence $\delta_k=1/k^\alpha$ satisfies the condition $\delta_{k+1} - \delta_k(1 - \delta_k) \geq 0$ for all $k$.  
    Note that for all $k \geq 1$, we have
    \begin{align*}
         \delta_{k+1} - \delta_k  + \delta_k^2  &= \frac{1}{(k+1)^\alpha} - \frac{1}{k^\alpha} + \frac{1}{k^{2\alpha}}\\
         &=\frac{k^{2\alpha} - k^\alpha(k+1)^\alpha + (k+1)^\alpha}{k^{2\alpha}(k+1)^\alpha}\\
         &\ge \frac{k^{2\alpha} - k^\alpha(k^\alpha+1) + (k+1)^\alpha}{k^{2\alpha}(k+1)^\alpha}\\
         &= \frac{(k+1)^\alpha - k^\alpha}{k^{2\alpha}(k+1)^\alpha}\\
         &\ge 0
    \end{align*}
    where the first inequality follows from Lemma \ref{lem:k_alpha} in the appendix, and the last inequality follows from the fact that $x^\alpha$ is a nondecreasing function of $x$, so $(k+1)^\alpha \ge k^\alpha$.
    Finally, note that for any $\alpha \in (0,1]$, $BC_2 = 2^\alpha > 1$.
    Hence, by Theorem \ref{thm:rates} (with $K=2$), the desired result follows.
\end{proof}

\subsection{Regret guarantees}

In this section, we justify why the Nash equilibrium $\sbar$ of the game played over the expected costs is a good strategy to learn by demonstrating that it is an $\epsilon$-Nash equilibrium of each stage game with high probability, for a large enough population.
Moreover, we derive bounds on the individual regret incurred by an agent when following the proposed learning dynamics. 
We define the instantaneous regret of an agent $i$ at stage $k$ as the excess cost incurred for not playing their best response strategy in stage  $k$. 
More precisely, given any $k \geq 0$, $i \in \Ncal$ and a strategy profile $s^k$, the instantaneous individual regret of agent $i$ is given by
\begin{align}
\label{eq: regret definition}
    &R_i(s_i^k, \zi{s^k}{G^k P^k}) \nonumber \\
    &:= J_i(s_i^k, \zi{s^k}{G^k P^k}) - \inf_{s_i\in\Scal_i} J_i(s_i, \zi{s^k}{G^k P^k}).
\end{align}

To prove that $\sbar$ results in an $\epsilon$-Nash equilibrium, we use a bound on the distance between the local aggregates generated by a realized network and the expected network, derived as an intermediate step in the proof of Corollary 1 in \cite{altaha2023gradient}.

\begin{lemma}[\hspace{1sp}\cite{altaha2023gradient}]
\label{lem:local_agg_bound}
Suppose that Assumption \ref{a:strat_set} holds. Let $s \in \Scal$ and let $G P$ be a random network identically distributed to the random network $G^k P^k$ described in Section \ref{sec:rdn_net_model}.
Then, for any fixed agent $i \in \Ncal$ and $\delta > 0$, 
    \begin{align*}
        \|z_i(s|GP) - z_i(s|\Gbar \Pbar)\|_2 \leq \sqrt{\frac{n\smax^2\log(4nN/\delta)}{2N}}
    \end{align*} 
with probability at least $1 - \delta/(2N)$. 
\end{lemma}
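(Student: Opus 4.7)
The plan is to reduce the claim to a coordinate-wise Hoeffding inequality followed by a union bound over the $n$ coordinates of the local aggregate. First, I would rewrite the difference explicitly. Since $G_{ii}=\bar G_{ii}=0$, both sums in the definition of the local aggregate omit the $j=i$ term, so
$$z_i(s|GP) - z_i(s|\bar G \bar P) \;=\; \frac{1}{N}\sum_{j\neq i} X_j\, s_j, \qquad X_j:=G_{ij}P_{jj}-\bar G_{ij}\bar P_{jj}.$$
Because the off-diagonal entries of $G$ and the diagonal entries of $P$ are sampled independently across indices (and $G$ and $P$ are independent of each other), the scalars $\{X_j\}_{j\neq i}$ form a mutually independent, mean-zero collection. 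Each is supported in an interval of length at most $1$, since $G_{ij}P_{jj}\in[0,1]$ deterministically.

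Next, I would fix a coordinate $\ell\in\{1,\dots,n\}$ and apply Hoeffding's inequality to the scalar sum $\frac{1}{N}\sum_{j\neq i} X_j\,[s_j]_\ell$. The $j$-th summand has range bounded by $|[s_j]_\ell|\le\|s_j\|_2\le \smax$, so the sum-of-squared-ranges is at most $N\smax^2$. Hoeffding then yields, for every $t>0$, a per-coordinate tail bound of at most $2\exp(-2Nt^2/\smax^2)$. Allocating the total failure budget $\delta/(2N)$ evenly across the $n$ coordinates amounts to setting $2\exp(-2Nt^2/\smax^2)=\delta/(2nN)$, which solves to $t^2 = \smax^2\log(4nN/\delta)/(2N)$.

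Finally, a union bound over the $n$ coordinates gives that, with probability at least $1-\delta/(2N)$, every coordinate of $z_i(s|GP) - z_i(s|\bar G \bar P)$ is bounded in absolute value by this $t$; squaring and summing yields an $\ell_2$ bound of $\sqrt{n}\,t$, matching the statement. I do not anticipate a serious obstacle—this is essentially a standard concentration argument. The only care required is the constant bookkeeping, so that the Hoeffding exponent (after tuning the per-coordinate probability for the union bound) yields precisely $\log(4nN/\delta)$ as displayed. The key structural input is the independence of the $X_j$'s along a fixed row $i$, which is inherited directly from the independent sampling assumptions on $G$ and $P$ in Section~\ref{sec:rdn_net_model}; the boundedness hypothesis in Assumption~\ref{a:strat_set} is what makes Hoeffding applicable here.
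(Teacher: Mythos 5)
Your argument is correct: the decomposition into independent, mean-zero, range-one summands $G_{ij}P_{jj}-\Gbar_{ij}\Pbar_{jj}$, the coordinate-wise Hoeffding bound with squared-range sum at most $N\smax^2$, and the union bound over the $n$ coordinates reproduce the stated constant $\sqrt{n\smax^2\log(4nN/\delta)/(2N)}$ and failure probability $\delta/(2N)$ exactly. The paper itself only cites this lemma from \cite{altaha2023gradient} rather than reproving it, but the precise form of the constants indicates that the cited proof is this same Hoeffding-plus-union-bound argument.
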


Using the lemma above, we prove that the strategy learned by the agents through the dynamics in  \eqref{eq:grad_for_dyn_pop} is an $\epsilon$-Nash equilibrium of each stage game with high probability, where $\epsilon$ decreases with the number of agents $N$. 
To prove the result, we introduce a smoothness assumption on the cost functions of the agents. 

\begin{assumption}[Smoothness of $J_i$]
\label{a:smooth-LJz}
    For all agents $i \in \Ncal$, the cost function $J_i(s,z)$ is Lipschitz continuous in both $s$ and $z$ with constants $L_J^s$, $L_J^z$, respectively, i.e.,
    \begin{align*}
        |J_i(s,z) - J_i(\tilde{s},\Tilde{z})| \leq L^s_J \|s - \Tilde{s}\|_2 + L^z_J \|z - \Tilde{z}\|_2 
    \end{align*}
    for all $s, \tilde{s}, z, \tilde{z}$. 
\end{assumption}

\begin{proposition}
\label{prop:epsilon_Nash}
Suppose that Assumptions \ref{a:strat_set}, \ref{a:step_size}, \ref{a:Ls_mu} and \ref{a:smooth-LJz} hold. Fix any $N \geq 1$, $k \geq 0$ and $\delta > 0$. 
Then, the Nash equilibrium $\sbar$ of the game played over the expected cost functions $\{\Tilde{J}_i(s)\}_{i \in \Ncal}$ is an $\epsilon^k_{N,\delta}$-Nash equilibrium of the stage game played over the realized cost functions $\{J_i(s_i,z_i(s|G^kP^k))\}_{i \in \Ncal}$, where $\epsilon_{N,\delta}^k$ is a random variable with the following distribution
\begin{align*} 
    \epsilon_{N,\delta}^k =
    \begin{cases}
        \bar{\epsilon}_{N,\delta} \ &\text{w.p. } 1 - \delta, \\
        4 L^z_J \smax \ &\text{w.p. } \delta, 
    \end{cases} 
\end{align*}
and where
\begin{align*}
    \bar{\epsilon}_{N,\delta} &:= 4 L^z_J \left( \sqrt{\frac{n\smax^2\log(4nN/\delta)}{2N}} \left(2 - \frac{\delta}{2N}\right)+ \frac{\delta\smax}{N} \right). 
\end{align*}
\end{proposition}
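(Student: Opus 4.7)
The plan is to use the expected cost $\Jtild_i$ as a bridge between the stage-game cost $J_i(\cdot, \zi{\sbar}{G^k P^k})$ and the Nash property of $\sbar$ in the expected game. Fix an agent $i\in\Ncal$ and a deviation $s_i \in \Scal_i$, and decompose
\begin{align*}
& J_i(\sbar_i, \zi{\sbar}{G^k P^k}) - J_i(s_i, \zi{\sbar}{G^k P^k}) \\
&= \big[ J_i(\sbar_i, \zi{\sbar}{G^k P^k}) - \Jtild_i(\sbar) \big] + \big[ \Jtild_i(\sbar) - \Jtild_i(s_i, \sbar_{-i}) \big] \\
&\quad + \big[ \Jtild_i(s_i, \sbar_{-i}) - J_i(s_i, \zi{\sbar}{G^k P^k}) \big].
\end{align*}
The middle bracket is nonpositive because $\sbar$ solves VI$(\Ftild, \Scal)$ and, by convexity of each $\Jtild_i$ in $s_i$, is a Nash equilibrium of the expected game. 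Hence it suffices to control the residuals $|J_i(s_i', \zi{\sbar}{G^k P^k}) - \Jtild_i(s_i', \sbar_{-i})|$ for $s_i' \in \{\sbar_i, s_i\}$.

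To bound each residual, I insert the deterministic aggregate $\zi{\sbar}{\Gbar \Pbar}$ and split via the triangle inequality into two summands. The first summand, $|J_i(s_i', \zi{\sbar}{G^k P^k}) - J_i(s_i', \zi{\sbar}{\Gbar \Pbar})|$, is a realized-vs.-expected-network comparison: by the $L_J^z$-Lipschitzness of $J_i$ in $z$ (Assumption \ref{a:smooth-LJz}) combined with Lemma \ref{lem:local_agg_bound}, it is bounded by $L_J^z \sqrt{n\smax^2 \log(4nN/\delta)/(2N)}$ on an event of probability at least $1-\delta/(2N)$. The second summand, $|J_i(s_i', \zi{\sbar}{\Gbar \Pbar}) - \Jtild_i(s_i', \sbar_{-i})|$, is handled deterministically: I rewrite $\Jtild_i(s_i', \sbar_{-i}) = \Eset[J_i(s_i', \zi{\sbar}{\Gtild \Ptild})]$ (using that $\zi{\sbar}{\cdot}$ is independent of $s_i'$ because $G_{ii}=0$), pull the expectation outside via $|\Eset[X]| \le \Eset[|X|]$, and apply Lipschitzness to obtain the deterministic bound $L_J^z \Eset[\|\zi{\sbar}{\Gtild \Ptild} - \zi{\sbar}{\Gbar \Pbar}\|_2]$. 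Splitting this expectation along the good event of Lemma \ref{lem:local_agg_bound} (contributing $\sqrt{\cdot}(1-\delta/(2N))$) and its complement (contributing $\delta\smax/N$ via the trivial bound $\|z-\bar z\|_2 \le 2\smax$ from Assumption \ref{a:strat_set}) produces the $\sqrt{\cdot}(1-\delta/(2N)) + \delta\smax/N$ pattern appearing inside $\bar\epsilon_{N,\delta}$.

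Summing the two summands over $s_i' \in \{\sbar_i, s_i\}$ and intersecting the good events via a union bound (over $i\in\Ncal$ and over the concentration event inside each summand) yields, on an event of probability at least $1-\delta$, a bound matching $\bar\epsilon_{N,\delta}$; this gives the first branch of $\epsilon_{N,\delta}^k$. On the complementary bad event, I would apply Lipschitz continuity directly to $J_i(\sbar_i, \zi{\sbar}{G^k P^k}) - J_i(s_i, \zi{\sbar}{G^k P^k})$ together with $\|\zi{\sbar}{\cdot}\|_2 \le \smax$ to recover the trivial $4 L_J^z \smax$ bound in the second branch. The main obstacle I foresee is the probabilistic bookkeeping: Lemma \ref{lem:local_agg_bound} is a per-$i$, per-realization concentration statement, and care is required in stitching together the high-probability control of the first summand, the deterministic expectation bound used for the second summand, and the union over agents (since the $\epsilon$-Nash property must hold simultaneously for all $i$) so as to land on the single advertised probability $1-\delta$ with the stated constants.
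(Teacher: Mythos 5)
Your proposal is correct and follows essentially the same route as the paper: the same three-term decomposition through the expected game (the paper writes your $\Jtild_i(s_i',\sbar_{-i})$ as $\Eset_{G,P}[J_i(s_i',\zi{\sbar}{GP})]$, which is identical since $G_{ii}=0$), the same insertion of $\zi{\sbar}{\Gbar\Pbar}$ with Lemma \ref{lem:local_agg_bound} plus the good/bad-event split of the expectation, and the same union bound over agents. The only slip is in your description of the fallback branch: the $4L^z_J\smax$ bound does not come from applying Lipschitz continuity directly to $J_i(\sbar_i,\cdot)-J_i(s_i,\cdot)$ at a common aggregate (that would involve $L^s_J$, not $L^z_J$), but from reusing your decomposition with the worst-case bound $\|\zi{\sbar}{GP}-\zi{\sbar}{G^kP^k}\|_2\le 2\smax$ in each of the two residual terms.
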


Note that the superscript in $\epsilon^k_{N,\delta}$ emphasizes the fact that $\epsilon_{N,\delta}^k$ is a random variable dependent on the realization of the network at iteration $k$.

\begin{proof}
    Let $G P$ be a random matrix identically distributed to the random matrix $G^k P^k$ described in Section \ref{sec:rdn_net_model}. 
    To streamline notation, let $\zbar_i(GP) := \zi{\sbar}{GP}$.
    Moreover, for clarity, we will use the notation $\Eset_{G,P}[\cdot]$ to refer to an expectation that is taken with respect to the random matrices $G$ and $P$ only.
    Fix any $i \in \Ncal$.  
    To show the desired result, we bound the deviation in agent $i$'s cost when playing $\sbar_i$ as opposed to any other strategy $\tilde{s}_i\in\Scal_i$ as follows 
    \begin{align}
        &J_i(\bar{s}_i,\zbar_i(G^kP^k)) - J_i(\tilde{s}_i,\zbar_i(G^kP^k)) \nonumber \\
        &= J_i(\bar{s}_i,\zbar_i(G^kP^k)) - \Eset_{G,P}[ J_i(\bar{s}_i,\zbar_i(GP)) ]\nonumber\\
        &\qquad + \underbrace{\Eset_{G,P}[J_i(\bar{s}_i,\zbar_i(GP)] - \Eset_{G,P}[ J_i(\tilde{s}_i,\zbar_i(GP))]}_{\leq 0} \nonumber \\
        &\qquad + \Eset_{G,P}[J_i(\tilde{s}_i,\zbar_i(GP)) ] - J_i(\tilde{s}_i,\zbar_i(G^kP^k)) \nonumber \\
        &\le \Eset_{G,P}[J_i(\bar{s}_i,\zbar_i(GP)) - J_i(\bar{s}_i,\zbar_i(G^kP^k)) \,\vert\, G^k,P^k]\nonumber\\
        &\qquad + \Eset_{G,P}[J_i(\tilde{s}_i,\zbar_i(GP)) - J_i(\tilde{s}_i,\zbar_i(G^kP^k))\,\vert\, G^k,P^k]
        \label{eq:J_i_2_terms}
    \end{align}
    where the inequality follows from the fact that $\bar{s}$ is the Nash equilibrium of the game played over the expected cost functions.
    We upper bound each of the two terms above separately.
    For the first term, we have
    \begin{equation}\label{eq:step}
    \begin{aligned}
        &\Eset_{G,P}\left[J_i(\bar{s}_i,\zbar_i(G^kP^k)) - J_i(\bar{s}_i,\zbar_i(GP)) \,\vert\, G^k,P^k\right]\\
        &\leq \Eset_{G,P} \left[L^z_J \|\zbar_i(G^kP^k) - \zbar_i(GP)\|_2 \,\vert\, G^k,P^k\right] \\
        &\leq L^z_J \|\zbar_i(G^kP^k) - \zbar_i(\Gbar\Pbar)\|_2 \\
        &\qquad + L^z_J \Eset_{G,P}[\|\zbar_i(GP) - \zbar_i(\Gbar\Pbar)\|_2 ]
    \end{aligned} 
       \end{equation}
    where the first inequality follows from Assumption \ref{a:smooth-LJz}.
    By Lemma \ref{lem:local_agg_bound}, for any $\delta>0$, we have that with probability at least $1-\delta/2N$
    \begin{align*}
        \|\zbar_i(GP) - \zbar_i(\Gbar \Pbar)\|_2 \leq \sqrt{\frac{n\smax^2\log(4nN/\delta)}{2N}} =: C_\delta.
    \end{align*} 
    Note that the same bound holds for $G^k, P^k$ in place of $G, P$ respectively. 
    Further, for all realizations of $G$ and $P$, we have the following \textit{worst-case} upper bound:
    \begin{align*}
        \|\zbar_i(GP) - \zbar_i(\Gbar \Pbar)&\|_2 = \Bigg\|\frac{1}{N} \sum_{j = 1}^N (G_{ij} P_{jj} - \Gbar_{ij} \Pbar_{jj})\bar{s}_j\Bigg\|_2 \\
        &\leq \frac{1}{N} \sum_{j = 1}^N (|G_{ij}| |P_{jj}| + |\Gbar_{ij}| |\Pbar_{jj}|)\left\|\bar{s}_j\right\|_2 \\
        &\leq 2 \smax,
    \end{align*}
    where we used Assumption \ref{a:strat_set} and the fact that $G_{ij}, P_{jj}, \Gbar_{ij}, \Pbar_{jj} \in [0,1]$. 
    Thus, with probability at least $1 - \delta/(2N)$, 
    \begin{align*}
        \Eset_{G,P}  [J_i(\bar{s}_i,&\zbar_i(G^kP^k)) - J_i(\bar{s}_i,\zbar_i(GP)) \,\vert\, G^k,P^k  ] \\
        &\leq 2 L^z_J \left(C_\delta + \left(1 - \frac{\delta}{2N}\right) C_\delta  + \left(\frac{\delta}{2N}\right) 2\smax \right)\\
        &= \frac{\bar{\epsilon}_{N,\delta}}{2}.
    \end{align*}
    The same bound holds for the second term in \eqref{eq:J_i_2_terms} with probability at least $1 - \delta/(2N)$.  
    Thus, with probability at least $1 - \delta/N$, 
    \begin{align*}
        &J_i(\bar{s}_i,\zbar_i(G^kP^k)) - J_i(\tilde{s}_i,\zbar_i(G^kP^k)) \le \bar{\epsilon}_{N,\delta}.
    \end{align*}
    Then, by the union bound, this upper bound holds for all agents $i\in\Ncal$ with probability at least $1-\delta$.
    In other words, $\sbar$ is an $\bar{\epsilon}_{N,\delta}$-Nash equilibrium with probability at least $1-\delta$.

    To get the $\epsilon_{N,\delta}^k$ bound stated in the proposition, we utilize the following worst-case bound which holds for all realizations:  
    \begin{align*}
        \|\zbar_i(GP) - \zbar_i(G^kP^k)\|_2 &\leq 2\smax.
    \end{align*}
    Substituting in \eqref{eq:J_i_2_terms} and \eqref{eq:step}, we obtain that with probability~1, 
    \begin{align*}
        J_i(\bar{s}_i,z_i(\bar{s}|G^kP^k)) - J_i(\tilde{s}_i,z_i(\bar{s}|G^kP^k)) \leq 4 L^z_J \smax
    \end{align*}
    for all $ i \in \Ncal$ and for all $ \tilde{s}_i \in \Scal_i$. 
    Thus, $\Bar{s}$ is an $\epsilon_{N,\delta}^k$-Nash equilibrium of the game with cost functions $\{J_i(s_i,z_i(s|G^kP^k))\}_{i \in \Ncal}$. 
\end{proof}

Having established that the strategy learned by the gradient dynamics in \eqref{eq:grad_for_dyn_pop} is an $\epsilon_{N,\delta}^k$-Nash equilibrium of each stage game, we next derive bounds on the instantaneous individual regret of each player.

\begin{theorem}
\label{thm:regret}
    Suppose that Assumptions \ref{a:strat_set}, \ref{a:step_size}, \ref{a:Ls_mu} and \ref{a:smooth-LJz} hold.   
    Let $\{s^k\}_{k \ge 0}$ be a sequence of strategy profiles generated by the gradient dynamics in \eqref{eq:grad_for_dyn_pop}. Fix any $\delta > 0$ and $N \geq 1$. 
    \begin{enumerate}
        \item Let $C_4 := L^s_J + 2 L^z_J$. 
        If the step size sequence is chosen as per Part 1) of Theorem \ref{thm:rates}, then for all $i \in \Ncal$ and for all $k \geq 0$, the instantaneous individual regret satisfies
        \begin{align} \label{eq:inst_reg_bd}
            R_i(s^k_i, z_i(s^k|G^kP^k)) \leq C_4 \|s^k - \sbar\|_2 + \epsilon_{N,\delta}^k
        \end{align}
        where, for a fixed $N$, $\|s^k-\sbar\|_2 = o\left(1/k^{(1 - \beta)/2}\right)$ almost surely for any $\beta \in (2\theta, 1)$, and $\{\epsilon_{N,\delta}^k\}_{k \geq 0}$ are i.i.d. random variables that follow the distribution described in Proposition~\ref{prop:epsilon_Nash}. 
        \item If the step size sequence is chosen as per Corollary \ref{cor:delta_k_choice} with $\alpha=1$, then for all $i \in \Ncal$, for all $k \geq 2$, the expected instantaneous individual regret satisfies
        \begin{align}
            \nonumber &\Eset[R_i(s_i^k, \zi{s^k}{G^k P^k})] \\
            &\le(L^s_J + 2 L^z_J) \sqrt{\frac{N D}{k}} + \Bar{\epsilon}_{N,\delta}(1 - \delta) + 4 L^z_J \smax \delta .
        \end{align}
    \end{enumerate}
\end{theorem}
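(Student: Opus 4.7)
The plan is to establish Part 1) through a pathwise four-term decomposition of the regret around the expected-game Nash equilibrium $\sbar$, and then derive Part 2) by taking expectations and invoking Corollary \ref{cor:delta_k_choice} together with the distribution of $\epsilon^k_{N,\delta}$ given in Proposition \ref{prop:epsilon_Nash}.

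For Part 1), fix $k \ge 0$ and $i \in \Ncal$, and let $s_i^\star \in \Scal_i$ attain $\inf_{s_i \in \Scal_i} J_i(s_i, \zi{s^k}{G^k P^k})$; existence follows from continuity of $J_i$ and compactness of $\Scal_i$ by Assumption \ref{a:strat_set}. The core step is to decompose
\begin{align*}
R_i(s^k_i, \zi{s^k}{G^k P^k})
&= \bigl[J_i(s^k_i, \zi{s^k}{G^k P^k}) - J_i(\sbar_i, \zi{s^k}{G^k P^k})\bigr] \\
&\quad + \bigl[J_i(\sbar_i, \zi{s^k}{G^k P^k}) - J_i(\sbar_i, \zi{\sbar}{G^k P^k})\bigr] \\
&\quad + \bigl[J_i(\sbar_i, \zi{\sbar}{G^k P^k}) - J_i(s_i^\star, \zi{\sbar}{G^k P^k})\bigr] \\
&\quad + \bigl[J_i(s_i^\star, \zi{\sbar}{G^k P^k}) - J_i(s_i^\star, \zi{s^k}{G^k P^k})\bigr].
\end{align*}
By Assumption \ref{a:smooth-LJz}, the first summand is bounded by $L^s_J \|s^k - \sbar\|_2$ and the second and fourth summands are each bounded by $L^z_J \|\zi{s^k}{G^k P^k} - \zi{\sbar}{G^k P^k}\|_2$. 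The third summand is at most $\epsilon^k_{N,\delta}$ by Proposition \ref{prop:epsilon_Nash}, since $\sbar$ is an $\epsilon^k_{N,\delta}$-Nash equilibrium of the stage game whose cost comparisons use precisely the aggregate $\zi{\sbar}{G^k P^k}$. Because $G^k_{ij} P^k_{jj} \in [0,1]$, the triangle inequality and Cauchy-Schwarz give
\begin{align*}
\|\zi{s^k}{G^k P^k} - \zi{\sbar}{G^k P^k}\|_2 \le \frac{1}{N}\sum_{j=1}^N \|s^k_j - \sbar_j\|_2 \le \frac{1}{\sqrt{N}}\|s^k - \sbar\|_2.
\end{align*}
Collecting the four bounds yields $R_i \le (L^s_J + 2 L^z_J / \sqrt{N})\|s^k - \sbar\|_2 + \epsilon^k_{N,\delta} \le C_4 \|s^k - \sbar\|_2 + \epsilon^k_{N,\delta}$, which is \eqref{eq:inst_reg_bd}. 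The almost-sure rate on $\|s^k - \sbar\|_2$ comes directly from Theorem \ref{thm:rates}, Part 1), while the $\{\epsilon^k_{N,\delta}\}_{k \geq 0}$ are i.i.d.\ because each depends only on $(G^k, P^k)$ and the network realizations are independent across $k$.

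For Part 2), I would take expectations of \eqref{eq:inst_reg_bd}: by linearity, $\Eset[R_i(s^k_i, \zi{s^k}{G^k P^k})] \le C_4 \Eset[\|s^k - \sbar\|_2] + \Eset[\epsilon^k_{N,\delta}]$. Corollary \ref{cor:delta_k_choice} with $\alpha = 1$ gives $\Eset[\|s^k - \sbar\|_2] \le \sqrt{N D / k}$ for $k \ge 2$, and the distribution in Proposition \ref{prop:epsilon_Nash} gives $\Eset[\epsilon^k_{N,\delta}] = \bar{\epsilon}_{N,\delta}(1 - \delta) + 4 L^z_J \smax \delta$, completing the claim.

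The main obstacle is constructing the four-term decomposition: the regret is measured under the realized aggregate $\zi{s^k}{G^k P^k}$, whereas Proposition \ref{prop:epsilon_Nash} only controls deviations from $\sbar$ under $\zi{\sbar}{G^k P^k}$, so the aggregate mismatch must be absorbed twice through the $L^z_J$-Lipschitz bound before the $\epsilon$-Nash property can be applied. Once this bridging step is in place, the remainder is an assembly of estimates already proven in the paper.
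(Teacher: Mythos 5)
Your proof is correct and follows essentially the same route as the paper's: the identical four-term decomposition around $\sbar$, Lipschitz bounds on the strategy and aggregate mismatches, Proposition \ref{prop:epsilon_Nash} for the $\epsilon$-Nash term, and Theorem \ref{thm:rates}/Corollary \ref{cor:delta_k_choice} plus linearity of expectation to finish. The only (harmless) differences are that you instantiate an explicit minimizer $s_i^\star$ where the paper carries the infima through, and your Cauchy--Schwarz bound $\|\zi{s^k}{G^kP^k}-\zi{\sbar}{G^kP^k}\|_2 \le \|s^k-\sbar\|_2/\sqrt{N}$ is actually tighter than the paper's Frobenius-norm estimate, though both are then relaxed to the same constant $C_4$.
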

\begin{proof}
To derive a bound on the regret, we use the facts that
\begin{enumerate}
    \item[(i)] $s^k$ converges to $\bar{s}$ (Theorem \ref{thm:rates}),
    \item[(ii)] $\sbar$ is an $\epsilon_{N,\delta}^k$-Nash equilibrium of the game played over the network $G^kP^k$ (Proposition \ref{prop:epsilon_Nash}). 
\end{enumerate}

Fix any $k \geq 0$,  $\delta > 0$, and $N \geq 1$. To begin, we rewrite the regret as
\begin{align}
    \label{eq:regret_expanded}
    &R_i(s_i^k,z_i(s^k|G^kP^k)) \nonumber \\
    &= J_i(s_i^k, z_i(s^k|G^kP^k)) - \inf_{s_i \in \Scal_i} J_i(s_i, z_i(s^k|G^kP^k)) \nonumber \\
    &= \underbrace{J_i(s_i^k, z_i(s^k|G^kP^k)) - J_i(\sbar_i, z_i(s^k|G^kP^k))}_{=: T_1} \nonumber \\
    &\quad + \underbrace{J_i(\sbar_i, z_i(s^k|G^kP^k)) - J_i(\sbar_i, z_i(\sbar|G^kP^k))}_{=: T_2} \nonumber \\
    &\quad + \underbrace{J_i(\sbar_i, z_i(\sbar|G^kP^k)) - \inf_{\tilde{s}_i \in \Scal_i} J_i(\Tilde{s}_i, z_i(\sbar|G^kP^k))}_{=: T_3} \nonumber \\
    &\quad + \underbrace{\inf_{\tilde{s}_i \in \Scal_i} J_i(\Tilde{s}_i, z_i(\sbar|G^kP^k)) - \inf_{s_i \in \Scal_i} J_i(s_i, z_i(s^k|G^kP^k))}_{=: T_4}. 
\end{align}
We derive a bound on each term above. 

\noindent For $T_1$: By Assumption \ref{a:smooth-LJz}, $J_i(\cdot, z_i(s^k|G^kP^k))$ is $L^s_J$-Lipschitz. Hence,
\begin{align*}
   T_1 \leq |T_1| \leq L^s_J \|s^k_i - \sbar_i\|_2 \leq L^s_J \|s^k - \sbar\|_2. 
\end{align*}

\noindent For $T_2$: By Assumption \ref{a:smooth-LJz}, $J_i(\sbar_i,\cdot)$ is $L^z_J$-Lipschitz. Hence, 
\begin{align*}
    T_2 &\leq |T_2| \\
    &\leq L^z_J \|z_i(s^k|G^kP^k) - z_i(\sbar|G^kP^k)\|_2 \\
    &= L^z_J \left\|\frac{1}{N} [ (G^kP^k \otimes \Id_n) (s^k - \sbar) ] _i \right\|_2 \\
    &\leq \frac{L^z_J}{N} \|G^kP^k\|_2 \|s^k - \sbar\|_2 \\
    &\leq L^z_J \|s^k - \sbar\|_2, 
\end{align*}
where we use the fact that $\|G^k P^k\|_2 \leq \|G^kP^k\|_F \leq N$ since $G^k_{ij}P^k_{jj} \in [0,1]$ for all $i,j$. 

\noindent For $T_3$: By Proposition \ref{prop:epsilon_Nash}, 
\begin{align*}
    T_3 \leq \epsilon_{N,\delta}^k.
\end{align*}

\noindent For $T_4$: We can rewrite the term as
\begin{align*}
    T_4 &= \inf_{\tilde{s}_i \in \Scal_i} \big[\underbrace{J_i(\Tilde{s}_i, z_i(\sbar|G^kP^k)) - J_i(\Tilde{s}_i, z_i(s^k|G^kP^k))}_{=: T_4'} \\
    &\quad + J_i(\Tilde{s}_i, z_i(s^k|G^kP^k))\big] - \inf_{s_i \in \Scal_i} J_i(s_i, z_i(s^k|G^kP^k)). 
\end{align*}
Note that we can bound $T_4'$ in a manner similar to $T_2$ and obtain
\begin{align*}
    T_4' \leq |T_4'| \leq L^z_J \|s^k - \sbar\|_2. 
\end{align*}
Hence, 
\begin{align*}
    T_4  &\leq L^z_J \|s^k - \sbar\|_2 + \inf_{\tilde{s}_i \in \Scal_i} J_i(\Tilde{s}_i, z_i(s^k|G^kP^k)) \\
    &\qquad - \inf_{s_i \in \Scal_i} J_i(s_i, z_i(s^k|G^kP^k)) \\
    &= L^z_J \|s^k - \sbar\|_2. 
\end{align*}

Substituting the bounds derived above into the expression of regret in \eqref{eq:regret_expanded}, we obtain the following upper bound: 
\begin{align}
\label{eq:regret_bound}
    R_i(s^k_i, z_i(s^k|G^kP^k)) \leq
    (L^s_J + 2 L^z_J) \|s^k - \sbar\|_2 + \epsilon_{N,\delta}^k.
\end{align}

To prove the result in part 1) note that for that choice of step size and for fixed $N$, it holds that
\begin{align*}
    \|s^k-\sbar\|_2 = o\left(\frac{1}{k^{(1 - \beta)/2}}\right)
\end{align*}
almost surely by Theorem \ref{thm:rates}. 

To prove the result in part 2), we 
take the expectation on both sides of \eqref{eq:regret_bound} and using Corollary \ref{cor:delta_k_choice}, we obtain that for all $k \geq 2$, $\delta > 0$, 
\begin{align*}
    &\Eset[R_i(s^k_i, z_i(s^k|G^kP^k))] \\
    &\leq 
    (L^s_J + 2 L^z_J) \Eset[\|s^k - \sbar\|_2] + \Eset[\epsilon_{N,\delta}^k] \\
    &\leq (L^s_J + 2 L^z_J) \sqrt{\frac{N D}{k}} + \Bar{\epsilon}_{N,\delta}(1 - \delta) + 4 L^z_J \smax \delta.
\end{align*}
\end{proof}

The following corollary gives a bound on the \emph{time-averaged}  regret when the step size is chosen appropriately.  

\begin{corollary}
\label{cor: average regret}
    Suppose that Assumptions \ref{a:strat_set}, \ref{a:step_size}, \ref{a:Ls_mu} and \ref{a:smooth-LJz} hold.   
    Let $\{s^k\}_{k \ge 0}$ be a sequence of strategy profiles generated by the gradient dynamics in \eqref{eq:grad_for_dyn_pop}. Fix any $\delta > 0$ and $N \geq 1$. 
    \begin{enumerate}
        \item If the step size sequence is chosen as per Part 1) of Theorem \ref{thm:rates}, then for all $i \in \Ncal$ and for all $T \geq 1$, the time-averaged realized individual regret satisfies
        \begin{align}
            \frac{1}{T} \sum_{k = 1}^T &R_i(s_i^k, \zi{s^k}{G^k P^k}) \nonumber \\
            &\leq \frac{C_4}{T} \sum_{k = 1}^T \|s^k - \sbar\|_2 + \frac{1}{T} \sum_{k = 1}^T \epsilon^k_{N,\delta}. \label{eq:avg_regret_prob}
        \end{align}
        Moreover, for a fixed $N$, the first sum satisfies $(C_4/T) \sum_{k = 1}^T \|s^k - \sbar\|_2 = O\left(1/T^{(1-\beta )/2}\right)$ for any $\beta \in (2 \theta, 1)$, and for a fixed $N$ and $\delta$, the second sum $(1/T)\sum_{k=1}^T \epsilon_{N,\delta}^k$ converges to $(1-\delta) \bar{\epsilon}_{N,\delta} +  4 \delta L_J^z \smax$.
        \item If the step size sequence is chosen as per Corollary \ref{cor:delta_k_choice} with $\alpha=1$, then for all $i \in \Ncal$ and for all $T \geq 2$, the time-averaged expected individual regret satisfies 
        \begin{align*}
            &\Eset\left[\frac{1}{T} \sum_{k = 1}^T R_i(s_i^k,z_i(s^k|G^kP^k))\right] \\
            &\qquad\leq \frac{2 \Bar{J}_1}{T} + 2 C_4 \sqrt{\frac{N D}{T}} + \Bar{\epsilon}_{N,\frac{1}{N}}\left(1 - \frac{1}{N}\right) + \frac{4 L^z_J \smax}{N}.
    \end{align*} 
    \end{enumerate}
\end{corollary}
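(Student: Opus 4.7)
The plan is to derive both parts by applying the pointwise instantaneous-regret bound from Theorem \ref{thm:regret}, summing over $k = 1, \ldots, T$, and dividing by $T$. After this, the problem reduces to controlling two separate time-averaged sums: one involving $\|s^k - \sbar\|_2$ and one involving $\epsilon^k_{N,\delta}$.

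For Part 1), inequality \eqref{eq:avg_regret_prob} follows immediately by summing \eqref{eq:inst_reg_bd} from $k = 1$ to $T$ and dividing by $T$. To bound the first sum, I would convert the pointwise almost sure rate $\|s^k - \sbar\|_2 = o(k^{-(1-\beta)/2})$ from Theorem \ref{thm:rates} into an averaged rate via an integral-comparison argument: on the almost sure event where $\|s^k - \sbar\|_2 \le C / k^{(1-\beta)/2}$ for all $k \ge k_0$ (with realization-dependent $C$ and $k_0$), I would split the sum at $k_0$ and use $\sum_{k=1}^T k^{-(1-\beta)/2} = O(T^{(1+\beta)/2})$ to conclude $(C_4/T) \sum_{k=1}^T \|s^k - \sbar\|_2 = O(T^{-(1-\beta)/2})$. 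For the second sum, I would exploit the fact that the network realizations $(G^k, P^k)$ are i.i.d.\ across $k$, so by Proposition \ref{prop:epsilon_Nash} the sequence $\{\epsilon^k_{N,\delta}\}_{k \ge 0}$ is i.i.d.\ with common mean $(1 - \delta)\bar\epsilon_{N,\delta} + 4\delta L^z_J \smax$. Since each $\epsilon^k_{N,\delta}$ is bounded, the strong law of large numbers gives the claimed almost sure convergence.

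For Part 2), I would take expectations in \eqref{eq:inst_reg_bd}, but the rate $\Eset[\|s^k - \sbar\|_2] \le \sqrt{ND/k}$ from Corollary \ref{cor:delta_k_choice} holds only for $k \ge 2$, so the $k = 1$ term needs separate treatment. Using $|J_i| \le \bar J_1$ from Assumption \ref{a:strat_set}, the instantaneous regret is always bounded by $2 \bar J_1$, which accounts for the $2 \bar J_1 / T$ term. For the remaining terms, the standard integral estimate $\sum_{k=2}^T k^{-1/2} \le \int_1^T x^{-1/2}\, dx = 2(\sqrt{T} - 1) \le 2\sqrt{T}$ converts $(C_4/T) \sum_{k=2}^T \sqrt{ND/k}$ into $2 C_4 \sqrt{ND/T}$. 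Since $\Eset[\epsilon^k_{N,\delta}]$ equals its mean identically for every $k$, the time-average contributes $(1 - \delta)\bar\epsilon_{N,\delta} + 4\delta L^z_J \smax$, and plugging in $\delta = 1/N$ yields the quoted expression.

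The main obstacle is the Cesaro-type conversion in Part 1): since both the constant and the threshold in the pointwise $o(\cdot)$ rate are random, one must fix a realization inside the almost sure event before applying the deterministic integral estimate and then argue that the exceptional null set does not affect the statement. Everything else amounts to routine summation and bounded-random-variable arguments.
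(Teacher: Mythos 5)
Your proposal is correct and follows essentially the same route as the paper: sum the instantaneous-regret bound from Theorem \ref{thm:regret}, split the $\|s^k-\sbar\|_2$ sum at the (realization-dependent) threshold $k_0$ and apply an integral comparison (the paper's Lemma \ref{lem:series_sum}), invoke the law of large numbers for the i.i.d.\ bounded $\epsilon^k_{N,\delta}$, and in Part 2) handle $k=1$ separately via the worst-case bound $2\bar{J}_1$ before choosing $\delta = 1/N$. No substantive differences from the paper's argument.
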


\begin{proof}
    \begin{enumerate}
        \item By Theorem \ref{thm:regret}, we obtain that for all $T \geq 1$, 
        \begin{align}
            \frac{1}{T} \sum_{k = 1}^T &R_i(s_i^k, \zi{s^k}{G^k P^k}) \nonumber \\
            &\leq \frac{C_4}{T} \sum_{k = 1}^T \|s^k - \sbar\|_2 + \frac{1}{T} \sum_{k = 1}^T \epsilon^k_{N,\delta}, 
        \end{align}
        where $\|s^k-\sbar\|_2 = o\left(1/k^{(1 - \beta)/2}\right)$ for any $\beta \in (2\theta, 1)$.  Thus, there exists $C_1' \geq 0$ and $k_0 \geq 0$ such that for all $k \geq k_0$, 
        \begin{align*}
            \|s^k-\sbar\|_2 &\le \frac{C_1'}{k^{(1 - \beta)/2}}.  
        \end{align*}
        Hence, for any $T \geq k_0$, 
        \begin{align*}
            &\frac{C_4}{T} \sum_{k = 1}^T \|s^k-\sbar\|_2\\
            &\leq \frac{C_4}{T} \sum_{k = 1}^{k_0 - 1} 2 \smax + \frac{C_4}{T} \sum_{k = k_0}^T \frac{C_1'}{k^{(1 - \beta)/2}} \\
            &\leq \frac{2 C_4 \smax (k_0 - 1)}{T} + \frac{C_4 C_1'}{T} \cdot \frac{T^{(\beta + 1)/2}}{(\beta + 1)/2},  
        \end{align*}
        where the last inequality follows from Lemma \ref{lem:series_sum} given in the appendix. 
        Then, the desired result in part 1) follows since we have that
        \begin{align}
            \frac{2 C_4 \smax (k_0 - 1)}{T} + \frac{C_4 C_1'}{T} \cdot \frac{T^{(\beta + 1)/2}}{(\beta + 1)/2} = O(T^{(\beta - 1)/2}) 
        \end{align}
        and the sum $(1/T) \sum_{k=1}^T \epsilon_{N,\delta}^k$ converges to $\Eset[\epsilon_{N,\delta}^k] = (1-\delta) \bar{\epsilon}_{N,\delta} +  4 \delta L_J^z \smax$ by the law of large numbers.

        \item Fix any $N \geq 1$, $i \in \Ncal$ and $\delta > 0$. By Assumption \ref{a:strat_set}, we have the following worst-case bound on the individual regret: 
        \begin{align*}
            &|R_i(s_i^k, \zi{s^k}{G^k P^k}| \\
            &= \Big|J_i(s_i^k, \zi{s^k}{G^k P^k}) - \inf_{s\in\Scal_i} J_i(s_i, \zi{s^k}{G^k P^k})\Big| \\
            &\leq |J_i(s_i^k, \zi{s^k}{G^k P^k})| + \Big|\inf_{s\in\Scal_i} J_i(s_i, \zi{s^k}{G^k P^k})\Big| \\
            &\leq 2 \bar{J}_1 
        \end{align*}
        for all $k \geq 0$. 
        Fix any $T \geq 2$. Then,
        \begin{align*}
            &\hspace{-1cm}\Eset\left[\frac{1}{T} \sum_{k = 1}^T R_i(s_i^k,z_i(s^k|G^kP^k))\right]\\
            &= \frac{1}{T} \sum_{k = 1}^T \Eset\left[R_i(s_i^k,z_i(s^k|G^kP^k))\right]\\
            &\overset{(a)}{\leq} \frac{2 \Bar{J}_1}{T} + \frac{1}{T} \sum_{k = 2}^T \left(L^s_J + 2 L^z_J\right)\sqrt{\frac{N D}{k}} \\
            &\qquad + \Bar{\epsilon}_{N,\delta}(1 - \delta) + 4 L^z_J \smax \delta\\
            &\overset{}{=} \frac{2 \Bar{J}_1}{T} + \left(L^s_J + 2 L^z_J\right)\frac{\sqrt{N D }}{T} \left (\sum_{k = 2}^T \frac{1}{\sqrt{k}} \right) \\
            &\qquad + \Bar{\epsilon}_{N,\delta}(1 - \delta) + 4 L^z_J \smax \delta\\
            &\overset{(b)}{\leq} \frac{2 \Bar{J}_1}{T} + \left(L^s_J + 2 L^z_J\right)\frac{\sqrt{N D }}{T} \left (2\sqrt{T} \right) \\
            &\qquad + \Bar{\epsilon}_{N,\delta}(1 - \delta) + 4 L^z_J \smax \delta \\
            &= \frac{2 \Bar{J}_1}{T} + 2 \left(L^s_J + 2 L^z_J\right)\sqrt{\frac{N D}{T}} \\
            &\qquad + \Bar{\epsilon}_{N,\delta}(1 - \delta) + 4 L^z_J \smax \delta
        \end{align*}
        where (a) follows from Theorem \ref{thm:regret}, and (b) follows from Lemma \ref{lem:series_sum} given in the appendix.
        Hence, the desired result in part 2) follows by choosing $\delta=1/N$.
    \end{enumerate}
\end{proof}

\section{Conclusion} \label{sec:concl}

This paper presents a random network framework to model dynamic populations with time-varying connectivity. 
Using this framework, we study convergence of projected gradient dynamics in repeated time-varying network games and we provide bounds on agents' time-averaged regret.

\section{Appendix}

\begin{lemma} \label{lem:k_alpha}
    Let $\alpha\in(0,1]$. Then, for any $k\ge1$ 
    \begin{align}
        (k+1)^\alpha \le k^\alpha + 1.
    \end{align}
\end{lemma}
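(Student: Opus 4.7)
The plan is to reduce the claim to the standard concavity estimate $(1+t)^\alpha \le 1+\alpha t$ for $\alpha \in (0,1]$ and $t \ge 0$, and then to use $\alpha \le 1$ together with $k \ge 1$ to absorb the residual term.

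More concretely, I would first factor out $k^\alpha$ and write
\begin{align*}
(k+1)^\alpha \;=\; k^\alpha\left(1+\tfrac{1}{k}\right)^{\!\alpha}.
\end{align*}
Because $\alpha \in (0,1]$, the map $t \mapsto (1+t)^\alpha$ is concave on $[0,\infty)$, and its tangent at $t=0$ is $1+\alpha t$; since a concave function lies below any of its tangents, this gives $(1+1/k)^\alpha \le 1 + \alpha/k$. Plugging this in yields
\begin{align*}
(k+1)^\alpha \;\le\; k^\alpha + \alpha\, k^{\alpha-1}.
\end{align*}
The remaining step is to argue $\alpha\,k^{\alpha-1}\le 1$. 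This is immediate: since $\alpha \le 1$ we have $\alpha - 1 \le 0$, and since $k \ge 1$ we get $k^{\alpha-1}\le 1$, so $\alpha k^{\alpha-1}\le \alpha\le 1$. Combining the two inequalities gives $(k+1)^\alpha \le k^\alpha + 1$, as required.

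There is no real obstacle here; the only judgment call is choosing which one-line inequality to invoke. An alternative route would be to define $g(x):=(x+1)^\alpha - x^\alpha$ and observe that $g'(x) = \alpha[(x+1)^{\alpha-1}-x^{\alpha-1}] \le 0$ for $\alpha\le 1$, so $g$ is nonincreasing on $[1,\infty)$ and hence $g(k)\le g(1) = 2^\alpha - 1 \le 1$. Either argument is a few lines, and the Bernoulli-type estimate above is the most direct.
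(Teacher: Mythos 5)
Your main argument is correct and takes a different route from the paper. The paper defines $f(k) := k^\alpha + 1 - (k+1)^\alpha$, checks $f'(k) = \alpha\bigl(k^{\alpha-1} - (k+1)^{\alpha-1}\bigr) \ge 0$, and concludes $f(k) \ge f(1) = 2 - 2^\alpha \ge 0$; your ``alternative route'' at the end is exactly this argument up to a sign flip, so you have in effect reproduced the paper's proof as your backup. Your primary route instead factors out $k^\alpha$ and applies the Bernoulli-type tangent-line bound $(1+t)^\alpha \le 1 + \alpha t$, which gives the slightly stronger intermediate estimate $(k+1)^\alpha \le k^\alpha + \alpha k^{\alpha-1}$ before coarsening to the claim. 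Both are sound; the concavity route buys you a sharper bound that decays in $k$ (potentially useful if one later needed more than the crude $+1$), while the paper's monotonicity argument is marginally more self-contained since it needs only the fact that $x \mapsto x^{\alpha-1}$ is nonincreasing rather than the concavity-below-tangent principle. No gaps in either version.
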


\begin{proof}
Let $f(k) := k^\alpha + 1 - (k+1)^\alpha$.
Then, for $k\ge1$,
\begin{align*}
    f'(k) &= \alpha k^{\alpha-1} - \alpha (k+1)^{\alpha-1}\\
    &= \alpha \left( \frac{1}{k^{1-\alpha}} - \frac{1}{(k+1)^{1-\alpha}} \right)\\
    &\ge 0
\end{align*}
where the inequality follows from the fact that $1-\alpha\in[0,1)$ and $1/x^{1-\alpha}$ is a nonincreasing function so $1/k^{1-\alpha} \ge 1/(k+1)^{1-\alpha}$.
Using the fact that $f$ is monotonically increasing for $k\ge 1$, this implies that $f(k) \ge f(1) = 2 - 2^\alpha \ge 0$.
Hence, $(k+1)^\alpha \le k^\alpha + 1$ holds.
\end{proof}

\begin{lemma}
\label{lem:series_sum}
    Let $k_0$ be a positive integer and $T > k_0$. Then, for any $\alpha \in (0,1)$, 
    \begin{align*}
        \sum_{k = k_0}^T \frac{1}{k^\alpha} \leq \frac{T^{1 - \alpha}}{1 - \alpha}. 
    \end{align*}
\end{lemma}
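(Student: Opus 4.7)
The plan is to bound the sum by an integral, exploiting the fact that $f(x) = 1/x^\alpha$ is a positive, monotonically decreasing function on $(0,\infty)$ for $\alpha \in (0,1)$. More precisely, for any integer $k \geq 1$ and any $x \in [k-1,k]$ (with the convention that we allow $x \to 0^+$ when $k = 1$, which is integrable since $\alpha < 1$), the monotonicity of $f$ gives $1/k^\alpha \leq 1/x^\alpha$. Integrating this inequality over $[k-1,k]$ yields the basic comparison
\begin{align*}
\frac{1}{k^\alpha} \;\leq\; \int_{k-1}^{k} \frac{dx}{x^\alpha}.
\end{align*}

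Summing this bound for $k = k_0, k_0+1, \dots, T$ telescopes the integrals into a single integral over $[k_0-1, T]$:
\begin{align*}
\sum_{k = k_0}^{T} \frac{1}{k^\alpha} \;\leq\; \int_{k_0 - 1}^{T} \frac{dx}{x^\alpha} \;=\; \frac{T^{1-\alpha} - (k_0 - 1)^{1-\alpha}}{1-\alpha}.
\end{align*}
Since $k_0 \geq 1$ and $\alpha \in (0,1)$, the term $(k_0-1)^{1-\alpha}$ is nonnegative (and equals $0$ in the boundary case $k_0 = 1$, where the integral $\int_0^T x^{-\alpha}\,dx$ is still finite precisely because $\alpha < 1$). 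Dropping this nonnegative term yields the desired bound $T^{1-\alpha}/(1-\alpha)$.

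Since the argument reduces to a single application of monotonicity plus an elementary integration, I do not anticipate any real obstacle. The one point deserving a sentence of care is the boundary case $k_0 = 1$: one must note that $\int_0^1 x^{-\alpha}\,dx = 1/(1-\alpha)$ is finite under the hypothesis $\alpha \in (0,1)$, so the comparison step remains valid without modification.
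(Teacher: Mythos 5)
Your proof is correct and is essentially identical to the paper's: the paper writes the comparison as $\sum_{k=k_0}^T k^{-\alpha} \leq \int_{k_0}^{T+1}(x-1)^{-\alpha}\,dx$, which is your integral $\int_{k_0-1}^{T}x^{-\alpha}\,dx$ after the substitution $u = x-1$, followed by the same step of dropping the nonnegative term $(k_0-1)^{1-\alpha}$. Your explicit remark on the integrability at $0$ in the boundary case $k_0=1$ is a small point of added care not spelled out in the paper.
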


\begin{proof}
    \begin{align*}
        \sum_{k = k_0}^T \frac{1}{k^\alpha} &\leq \int_{k_0}^{T+1} \frac{1}{(x - 1)^\alpha} dx \\
        &= \frac{(x - 1)^{- \alpha + 1}}{- \alpha + 1}\Bigg|_{x = k_0}^{x = T + 1} \\
        &= \frac{T^{-\alpha + 1} - (k_0 - 1)^{-\alpha + 1}}{-\alpha + 1} \\
        &\leq \frac{T^{1 - \alpha}}{1 - \alpha}. 
    \end{align*}
\end{proof}

\bibliographystyle{ieeetr}
\bibliography{references}

\end{document}